\spnewtheorem{mylemma}[theorem]{Lemma}{\bf}{\it}
\spnewtheorem{myprop}[theorem]{Proposition}{\bf}{\it}
\newcommand{\rank}[0]{\textrm{rank}}
\newcommand{\score}[0]{\textrm{score}}
\newcommand{\wt}[1]{\widetilde{#1}}
\begin{document}

\mainmatter
\title{Multivariate Complexity Analysis of Swap Bribery}

\author{ Britta Dorn\inst{1} \and Ildik\'o Schlotter\inst{2}\thanks{Supported 
by the Hungarian National Research Fund (OTKA 67651).}}

\institute{Fakult\"at f\"ur Mathematik und Wirtschaftswissenschaften, Universit\"at Ulm\\
Helmholtzstr.~18, D-89081 Ulm, Germany\\
\email{britta.dorn@uni-ulm.de}
\and 
Budapest University of Technology and Economics\\
H-1521 Budapest, Hungary\\
\email{ildi@cs.bme.hu}}

\maketitle

\begin{abstract}
We consider the computational complexity of a problem modeling {\it bribery} in the context of voting systems.
In the scenario of {\sc Swap Bribery}, each voter assigns a certain price for swapping the positions of two 
consecutive candidates in his preference ranking. 
The question is whether it is possible, without exceeding a given budget, 
to bribe the voters in a way that the preferred candidate wins in the election. 

We initiate a parameterized and multivariate complexity analysis of {\sc Swap Bribery}, 
focusing on the case of $k$-approval. We investigate how different cost functions affect
the computational complexity of the problem. We identify a special case of $k$-approval 
for which the problem can be solved in polynomial time, whereas we prove NP-hardness 
for a slightly more general scenario. 
We obtain fixed-parameter tractability 
as well as W[1]-hardness results for certain natural parameters. 
\end{abstract}


\section{Introduction}\label{sec:introduction}

In the context of voting systems, the question of how to manipulate the votes 
in some way in order to make a preferred candidate win the election is a very 
interesting question. One possibility is {\it bribery}, which can be described 
as spending money on changing the voters' preferences over the candidates in 
such a way that a preferred candidate wins, while respecting a given budget. 
There are various situations that fit into this scenario: 
The act of bribing the voters in order to make them change their preferences, 
or paying money in order to get into the position of being able to change the 
submitted votes, but also the setting of systematically spending money in an 
election campaign in order to convince the voters to change their opinion 
on the ranking of candidates.

The study of bribery in the context of voting systems was initiated by Faliszewski, 
Hemaspaandra, and Hemaspaandra in 2006~\cite{FHH06}. Since then, various models have 
been analyzed. In the original version, each voter may have a different but fixed 
price which is independent of the changes made to the bribed vote. 
The scenario of nonuniform bribery introduced by Faliszewski~\cite{Fal08} 
and the case of microbribery studied by Faliszewski, Hemaspaandra, Hemaspaandra, 
and Rothe in~\cite{FHHR09} allow for prices that depend on the amount of change 
the voter is asked for by the briber.

In addition, the {\sc Swap Bribery} problem as introduced by Elkind, Faliszewski, 
and Slinko~\cite{EFS09} takes into consideration the ranking aspect of the votes: 
In this model, each voter may assign different prices for swapping two consecutive 
candidates in his preference ordering. This approach is natural, since it captures 
the notion of small changes and comprises the preferences of the voters.
Elkind et al.~\cite{EFS09} prove complexity results for this problem for several 
election systems such as Borda, Copeland, Maximin, and approval voting. 
In particular, they provide a detailed case study for $k$-approval. 
In this voting system, every voter can specify a group of~$k$ preferred candidates 
which are assigned one point each, whereas the remaining candidates obtain no points. 
The candidates which obtain the highest sum of points over all votes are the winners 
of the election. Two prominent special cases of~$k$-approval are plurality, 
(where $k =1$, i.e., every voter can vote for exactly one candidate) and 
veto (where $k = m-1$ for $m$ candidates, i.e., every voter assigns one point 
to all but one disliked candidate).
Table~\ref{tab:t-approval} shows a summary of research considering {\sc Swap Bribery} 
for $k$-approval, including both previously known and newly achieved results.

\begin{table}[t]
\begin{center}
    \begin{tabular}{@{ }llll@{ }}
   \toprule
      & Result & Reference \\
\midrule
$k=1$ or $k=m-1$  & P  & \cite{EFS09} \\
$k=2$  & NP-complete   & \cite{BD10,EFS09} \\
$k \geq 3$ constant,   & NP-complete  & \cite{EFS09} \\
	\quad costs in $\{0,1,2\}$ & & \\
$k \geq 2$ constant,  & {\bf NP-complete}  & \cite{BD10}, Prop.~\ref{costs_01}\\
	\quad costs in $\{0,1\}$ and $\beta=0$  & & \\
$k \geq 2$ constant,  & {\bf NP-complete}, {\bf W[1]-hard ($\beta$)} &  Thm.~\ref{thm_2approval}\\
	\quad costs in $\{1,1+\varepsilon \}$, $\varepsilon > 0$ & & \\
$k$ constant, $m$ or $n$ constant  & P   & \cite{EFS09} 
\\
$k$ part of the input, $n=1$ & NP-complete & \cite{EFS09}\\
$2 \leq k \leq m-2$ part of the input, & {\bf NP-complete}  & \cite{BHN09}, Prop.~\ref{costs_01}\\
	\quad costs in $\{0,1\}$ and $\beta=0$, $n$ constant & & \\ 
$k$ part of the input, $n=1$ & {\bf W[1]-hard ($k$)} & Thm.~\ref{thm_k_hard}\\
$k$ part of the input, all costs $=1$  & {\bf P}   & Thm.~\ref{uniform}   \\
$k$ part of the input & {\bf FPT ($m$)} &  Thm.~\ref{thm_candidates}\\ 
$k$ part of the input & {\bf FPT ($n$)} & Thm.~\ref{thm_color}\\
$k$ part of the input & {\bf FPT ($\beta,n$) by kernelization}  &  Thm.~\ref{thm_kernel}\\ 
    \bottomrule
    \end{tabular}
\end{center}
\caption{
Overview of known and new results for {\sc Swap Bribery} for~$k$-approval. 
The results obtained in this paper are printed in bold. 
Here, $m$ and $n$ denote the number of candidates and votes, respectively, and~$\beta$ is the budget. 
For the parameterized complexity results, the parameters are indicated in parentheses. 
}\label{tab:t-approval}
\end{table}

This paper contributes to the further investigation of the case study of $k$-approval 
that was initiated in~\cite{EFS09}, this time from a parameterized point of view. 
%
The main goal of this approach is to find {\it fixed-parameter tractable} (FPT) algorithms confining 
the combinatorial explosion which is inherent in NP-hard problems to certain 
problem-specific parameters, or to prove that their existence is implausible. 
This line of research has been pioneered by Downey and Fellows~\cite{DF99}, 
see also~\cite{FG06,Nie06} for two more recent monographs, and naturally expands into 
the field of multivariate algorithmics, where the influence of ``combined'' 
parameters is studied, see the recent survey by Niedermeier~\cite{Nie10}. 
These approaches
seem to be appealing in the context of voting systems, where NP-hardness 
is a desired property for various problems, like {\sc Manipulation} (where certain voters, 
the manipulators, know the preferences of the remaining voters and try to adjust their own 
preferences in such a way that a preferred candidate wins), {\sc Lobbying} (here, a 
lobby affects certain voters on their decision for several issues in an election), {\sc Control} 
(where the chair of the election tries to make a certain candidate win (or lose) by deleting 
or adding either candidates or votes), or, as in our case, {\sc Swap Bribery}. 
However, NP-hardness does not necessarily constitute a guarantee against such dishonest behavior. 
As Conitzer et al.~\cite{CSL07} point out for the {\sc Manipulation} problem, 
an NP-hardness result in these settings would lose relevance if an efficient 
fixed-parameter algorithm with respect to an appropriate parameter was found. 
Parameterized complexity can hence provide a more robust notion of hardness. 
The investigation of problems from voting theory under this aspect has started, 
see for example~\cite{Bet10,BHN09,BU09,CFRS07,LFZL09}.

We examine how the computational complexity of {\sc Swap Bribery} for $k$-approval
depends on certain restrictions on the cost function.
We show NP-hardness as well as fixed-parameter intractability of {\sc Swap Bribery} 
for a very restricted version of the problem with a fixed value of~$k$ if the 
parameter is the budget, whereas we identify a natural special case 
which can be solved in polynomial time.
By contrast, we obtain fixed-parameter tractability with respect to the parameter 
`number of candidates' for $k$-approval and a large class of other voting systems. 
We also investigate the parameter `number of votes', and consider the situation where~$k$ 
is part of the input, for which {\sc Swap Bribery} is known to be NP-complete already 
for only one vote. We strengthen this result by proving W[1]-hardness with respect 
to the parameter~$k$, whereas we obtain fixed-parameter tractability with respect to~$n$ 
for the case where~$k$ is a constant by using the technique of color-coding.
We also provide a polynomial kernel where we consider certain combined parameters.
%

The paper is organized as follows. 
After introducing notation in Section~\ref{sec:preliminaries}, we investigate the complexity 
of {\sc Swap Bribery} depending on the cost function in Section~\ref{sec:costfunction}, 
where we show the connection to the {\sc Possible Winner} problem, 
identify a polynomial-time solvable case of $k$-approval and a hardness result. 
In Section~\ref{sec:candidates}, we consider the parameter `number of candidates' 
and obtain an FPT result for {\sc Swap Bribery} for a large class of voting systems. 
Section~\ref{sec:votes} investigates the influence of the parameter `number of votes', 
providing both W[1]-hardness and fixed-parameter tractability results and considering 
combinations of different parameters.
We conclude with a discussion of open problems and further directions 
that might be interesting for future investigations.


\section{Preliminaries}\label{sec:preliminaries}
{\bf Elections.} An $\mathcal{E}$-{\it election} is a pair $E= (C,V)$, 
where $C= \{c_1, \dots, c_m\}$ denotes the set of {\it candidates}, 
$V=\{v_1, \dots, v_n\}$ is the set of {\it votes} or {\it voters}, 
and $\mathcal{E}$ is the {\it election system} which is a function mapping $(C,V)$ 
to a set~$W\subseteq C$ called the {\it winners} of the election. 
We will express our results for the {\it winner case} where several winners are possible, but our results 
can be adapted to the {\it unique winner case} where~$W$ consists of a single candidate only. 

In our context, each vote is a strict linear order over the set~$C$, and we denote 
by $\textrm{rank}(c,v)$ the position of candidate~$c \in C$ in a vote~$v\in V$. 
By contrast, the concept of partial votes, mentioned only occasionally in this paper, 
can be used to describe partial orders over the candidates.

For an overview of different election systems, we refer to~\cite{BF02}. 
We will mainly focus on election systems that are characterized by a given {\it scoring rule}, 
expressed as a vector~$(s_1, s_2, \dots, s_m)$. Given such a scoring rule, the {\it score} 
of a candidate~$c$ in a vote~$v$, denoted by $\textrm{score}(c,v)$, is $s_{\textrm{rank}(c,v)}$. 
The score of a candidate~$c$ in a set of votes~$V$ is $\textrm{score}(c,V) = \sum_{v \in V}\textrm{score}(c,v)$, 
and the winners of the election are the candidates that receive the highest score in the given votes. 

The election system we are particularly interested in is $k$-approval, 
which is defined by the scoring vector $(1, \dots, 1, 0, \dots, 0)$, starting with~$k$~ones. 
In the case of $k=1$, this is the {\it plurality} rule, whereas $(m-1)$-approval is also known as {\it veto}. 
Given a vote~$v$, we will say that a candidate~$c$ with $1 \leq \textrm{rank}(c,v) \leq k$ takes 
a {\it one-position} in~$v$, whereas a candidate~$c'$ with $k+1 \leq \textrm{rank}(c',v) \leq m$ 
takes a {\it zero-position} in~$v$.

\vspace{4pt}
\noindent
{\bf Swap Bribery, Possible Winner, Manipulation.}
Given $V$ and $C$, a \emph{swap} in some vote~$v \in V$ is a triple $(v,c_1, c_2)$ where 
$\{c_1, c_2\} \subseteq C, c_1 \neq c_2$. 
Given a vote $v$, we say that a swap~$\gamma=(v,c_1, c_2)$ is \emph{admissible in $v$}, 
if $\textrm{rank}(c_1,v) = \textrm{rank}(c_2,v)-1$. 
Applying this swap means exchanging the positions of $c_1$ and $c_2$ in the vote $v$, 
we denote by $v^{\gamma}$ the vote obtained this way. 
Given a vote $v$, a set $\Gamma$ of swaps is \emph{admissible in $v$}, if 
the swaps in $\Gamma$ can be applied in~$v$ in a sequential manner, one after the other, in some order. 
Note that the obtained vote, denoted by $v^{\Gamma}$, is independent from the order in which the swaps of $\Gamma$ are applied.
We also extend this notation for applying swaps in several votes, in the straightforward way. 

In a \textsc{Swap Bribery} instance, we are given $V$, $C$, and $\mathcal{E}$ forming an election, 
a preferred candidate $p \in C$, a cost function $c:C\times C \times V \rightarrow \mathbb{N}$ 
mapping for every vote each possible swap to a non-negative integer, 
and a budget $\beta \in \mathbb{N}$. The task is to determine a set of admissible 
swaps~$\Gamma$ whose total cost is at most~$\beta$, such that $p$ is a winner in 
the $\mathcal{E}$-election $(C, V^{\Gamma})$. Such a set of swaps is called a 
\emph{solution} of the \textsc{Swap Bribery} instance. 
The underlying decision problem is the following. 

\vspace{-4pt}
\begin{quote}
\textsc{Swap Bribery}\\
\textbf{Given:} An $\mathcal{E}$-election $E=(C,V)$, a preferred candidate~$p\in C$, 
a cost function~$c$ mapping each possible swap to a non-negative integer, and a budget~$\beta \in \mathbb{N}$.\\
\textbf{Question:} Is there a set of swaps~$\Gamma$ whose total cost is at 
most~$\beta$ such that~$p$ is a winner in the $\mathcal{E}$-election~$(C, V^{\Gamma})$? 
\end{quote}
\vspace{-4pt}

We will also show the connection between {\sc Swap Bribery} and the {\sc Possible Winner} problem. 
In this setting, we have an election where some of the votes may be {\it partial} orders over~$C$ 
instead of complete linear ones. The question is whether it is possible to extend the partial votes 
to complete linear orders in such a way that a preferred candidate wins the election. 
For a more formal definition, we refer to the article by Konczak and Lang~\cite{KL05} who introduced this problem. 
The corresponding decision problem is defined as follows. 

\vspace{-4pt}
\begin{quote}
\textsc{Possible Winner}\\
\textbf{Given:} A set of candidates~$C$, a set of partial votes $V'= (v'_1, \dots, v'_n)$ over~$C$, 
an election system~$\mathcal{E}$, and a preferred candidate~$p \in C$.\\
\textbf{Question:} Is there an extension $V=(v_1, \dots, v_n)$ of $V'$ such that 
each~$v_i$ extends~$v'_i$ to a complete linear order, and $p$ is a winner in the $\mathcal{E}$-election $(C, V)$?
\end{quote}
\vspace{-4pt}

A special case of {\sc Possible Winner} is {\sc Manipulation} (see e.g.~\cite{CSL07,HH07,EFS09}). 
Here, the given set of partial orders consists of two subsets; 
one subset contains complete preference orders and the other one completely unspecified votes. 
\\

\noindent
{\bf Parameterized complexity, Multivariate complexity.} 

Parameterized complexity is a two-dimensional framework for studying the computational complexity of
problems~\cite{DF99,FG06,Nie06}. One dimension is the size of the input $I$ 
(as in classical complexity theory) and the other dimension is the parameter $k$ (usually a positive integer). 
A problem is called {\it fixed-parameter tractable} (FPT) with respect to a 
parameter~$k$ if it can be solved in~$f(k) \cdot |I|^{O(1)}$ time, 
where~$f$ is an arbitrary computable function~\cite{DF99,FG06,Nie06}. 
Multivariate complexity is the natural sequel of the parameterized approach 
when expanding to multidimensional parameter spaces, see~\cite{Nie10}.
For example, if we consider a parameterization where the parameter is a pair $k=(k_1,k_2)$, 
then we refer to this as a \emph{combined parameterization}, 
or we simply say that both $k_1$ and $k_2$ are parameters.
In such a case, the desired FPT algorithm should run in time~$f(k_1,k_2) \cdot |I|^{O(1)}$ for some~$f$.

The first level of (presumable) parameterized intractability is captured by the complexity class~W[1]. 
A {\it parameterized reduction} reduces a problem instance $(I, k)$ in $f(k)\cdot|I|^{O(1)}$ time 
(for some computable function~$f$) to an instance~$(I',k')$   
such that~$(I,k)$ is a yes-instance if and only if~$(I',k')$ is a yes-instance, 
and~$k'$ only depends on~$k$ but not on~$|I|$. 
To prove W[1]-hardness of a given parameterized problem $Q$,
one needs to present a parameterized reduction from some already known W[1]-hard problem to $Q$.

We will use the following W[1]-hard 
problems~\cite{downey-fellows-clique-hardness,fellows-hermelin-rosamond-vialette-multicolored-hardness} 
for the hardness reductions in this work: 
\vspace{-4pt}
\begin{quote}\textsc{Clique}\\
\noindent
\textbf{Given:} An undirected graph~$G=(V, E)$ and $k \in \mathbb{N}$.\\
\textbf{Question:} Is there a complete subgraph (clique) of~$G$ of size~$k$?
\end{quote}
\vspace{-4pt}

\vspace{-4pt}
\begin{quote}\textsc{Multicolored Clique}\\
\noindent
\textbf{Given:} An undirected graph~$G=(V_1 \cup V_2 \cup \dots \cup V_k, E)$ with  
$V_i \cap V_j = \emptyset$  for $1 \leq i < j \leq k$ where the vertices of $V_i$ 
induce an independent set for $1 \leq i \leq k$.\\
\textbf{Question:} Is there a complete subgraph (clique) of~$G$ of size~$k$?
\end{quote}
\vspace{-4pt}

We will also make use of a {\it kernelization} algorithm in this work, 
which is a standard technique for obtaining fixed-parameter results, see \cite{Bod09,GN07,Nie06}. 
The idea is to transform the input instance~$(I,k)$ in a polynomial-time preprocessing step via 
{\it data reduction rules} into a ``reduced'' instance~$(I', k')$ such that two conditions hold: 
First, $(I,k)$ is a yes-instance if and only if~$(I',k')$ is a yes-instance, and second, 
the size of the reduced instance depends on the parameter only, 
i.e. $|I'|+|k'| \leq g(p)$ for some arbitrary computable function~$g$. 
The reduced instance~$(I',k')$ is then referred to as the {\it problem kernel}. 
If in addition~$g$ is a polynomial function, we say that the problem admits a {\it polynomial kernel}. 
The existence of a problem kernel is equivalent to fixed-parameter tractability of the 
corresponding problem with respect to the particular parameter~\cite{Nie06}.



\section{Complexity depending on the cost function}\label{sec:costfunction}

In this section, we focus our attention on {\sc Swap Bribery} for $k$-approval. 
We start with the case where all costs are equal to~$1$, for which we obtain polynomial-time solvability. 
Below we provide an algorithm which for every possible $s$ checks
if there is a solution in which the preferred candidate wins with score $s$.
This can be carried out by solving a minimum cost maximum flow problem.

\begin{theorem}
\label{uniform}
\textsc{Swap Bribery} for $k$-approval is polynomial-time solvable, if all costs are $1$.
\end{theorem}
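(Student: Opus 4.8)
The plan is to guess the final score of the preferred candidate and, for each guess, model the cheapest way to achieve it as a minimum-cost maximum-flow problem. Since all swap costs are~$1$, the cost of bribing a single vote equals the number of swaps applied, and moving a candidate from position~$i$ to position~$j$ in a vote costs exactly~$|i-j|$. What matters for $k$-approval is only whether each candidate sits in a one-position or a zero-position, so the essential decision in each vote is which $k$ candidates occupy the one-positions. The key observation is that, within a single vote, if we want a prescribed set of candidates to take the one-positions, the minimum number of swaps needed is determined combinatorially and can be computed by a simple counting argument (it equals the total displacement when we pull the desired candidates into the top~$k$ block and push the others out).

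First I would fix a target score~$s$ for the preferred candidate~$p$ and observe that $s$ ranges over $O(n)$ possible values, so we can afford to iterate over all of them. For a fixed~$s$, the goal is to find a minimum-cost set of swaps so that (i)~$p$ ends with score exactly~$s$ (or at least~$s$), and (ii)~every other candidate ends with score at most~$s$. The natural approach is to handle~$p$ separately: in each vote we decide, at a precomputed cost, whether~$p$ is pushed into a one-position or left in a zero-position, choosing a combination of votes that yields score~$s$ for~$p$ as cheaply as possible. The remaining freedom is in how the one-positions not used by~$p$ are distributed among the other candidates, and this is where the flow network enters.

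Next I would build a flow network whose left side has a node for each vote and whose right side has a node for each candidate other than~$p$, together with a source and a sink. Each vote can supply a bounded number of one-positions (namely~$k$, minus one if~$p$ is forced into a one-position there), and each candidate node feeds into the sink through edges whose capacities encode the constraint ``score at most~$s$''. The cost on the edge from a vote to a candidate should reflect the marginal number of swaps required to move that candidate into a one-position in that vote. Routing maximum flow at minimum cost then realizes the cheapest assignment of one-positions consistent with the score bounds, and comparing the total cost (flow cost plus the cost spent on~$p$) against the budget~$\beta$ decides the instance for that value of~$s$.

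The main obstacle I anticipate is making the per-vote cost truly linear and separable, so that it can be encoded as edge costs in a flow network. Within one vote the swaps interact: moving one candidate up interferes with moving another up, so the cost of a whole assignment is not obviously the sum of independent per-candidate contributions. I expect to resolve this by arguing that, for the $\{0,1\}$ scoring structure of $k$-approval, an optimal bribery never performs wasteful swaps and the minimal cost decomposes as a sum of terms, each depending only on how far an individual candidate must travel across the one/zero boundary; this decomposition is what licenses the reduction to min-cost flow. Verifying this separability, together with checking that integrality of optimal flows yields a genuine swap set, will be the technical heart of the argument, while the outer loop over~$s$ and the polynomial bound on the number of flow computations are routine.
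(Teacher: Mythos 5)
Your outer structure is the same as the paper's: guess the final score $s$ of $p$ (at most $n$ values), reduce each guess to a minimum-cost maximum-flow computation, and justify the edge costs by a per-vote separability argument. That separability does in fact hold, and is worth stating exactly: if the prescribed set of candidates occupying the one-positions of vote~$v$ after bribery is~$S$, the minimum number of swaps is $\sum_{c\in S}\rank(c,v)-k(k+1)/2$, i.e., a fixed per-vote constant plus an independent contribution $\rank(c,v)$ for each chosen candidate. The paper proves the same identity in the equivalent form $\sum_{c'\in X^{\leftarrow}}\rank(c',v)-\sum_{c\in X^{\rightarrow}}\rank(c,v)$ for the sets of promoted and demoted candidates, together with the matching lower bound for the converse direction. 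So the step you flag as the technical heart goes through essentially as you anticipate.

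The genuine gap is your two-stage treatment of~$p$: you first pick, at precomputed per-vote costs, the cheapest combination of $s$ votes in which $p$ takes a one-position, and only then run the flow for the remaining one-positions (your network explicitly has nodes only for candidates other than~$p$, with each vote's supply reduced if $p$ was placed there). This decoupling is unsound, because the choice of which votes $p$ occupies interacts with the cap-$s$ constraints of the other candidates: promoting $p$ in a vote simultaneously demotes an incumbent, and it can be strictly cheaper to pay more for $p$ in a vote whose incumbent must lose a point anyway. Concretely, take $k=1$, candidates $\{p,a,b,c\}$, votes $v_1\colon c\succ p\succ a\succ b$, $\,v_2=v_3\colon a\succ b\succ p\succ c$, $\,v_4\colon b\succ c\succ a\succ p$, all swap costs~$1$, budget $\beta=2$. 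Promoting $p$ to the top of $v_2$ (two swaps) yields tops $c,p,a,b$, so every candidate scores~$1$ and $p$ co-wins: a yes-instance. But for the guess $s=1$ your first stage uniquely selects $v_1$ (cost~$1$), after which $a$ still holds two one-positions and the cheapest completion respecting the caps costs~$2$ (e.g.\ promote $b$ in $v_2$ and $c$ in $v_4$), for a total of $3>\beta$; guesses $s\geq 2$ already cost at least $3$ for $p$ alone, so your algorithm answers no on a yes-instance. The repair is minor and is exactly what the paper does: keep $p$ inside the network as an ordinary candidate node and force exactly $s$ units of flow through it --- in the paper's construction the arc $b_pt$ has capacity $s^*$ while the arc $xt$ has capacity $|V|k-s^*$, so any flow of value $|V|k$ saturates $b_pt$ --- thereby optimizing $p$'s placement jointly with everyone else's.
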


\begin{proof}
Let $V$ be the set of votes and $C$ be the set of candidates. 
The score of any candidate is an integer between $0$ and $|V|$.
Our algorithm finds out for each possible $s^*$ with $1 \leq s^* \leq |V|$ whether 
there is a solution in which the preferred candidate $p$ wins with score $s^*$.

\begin{figure}[t]
\begin{center}
    \psfrag{A}[lb][lb]{$A$}
    \psfrag{A'}[lb][lb]{$A'$}
    \psfrag{B}[lb][lb]{$B$}
    \psfrag{s}[lb][lb]{$s$}
    \psfrag{s*}[lb][lb]{$\scriptstyle s^*$}
    \psfrag{Vk-s*}[lb][lb]{$\scriptstyle |V|k-s^*$}
    \psfrag{x}[lb][lb]{$x$}
    \psfrag{t}[lb][lb]{$t$}
    \psfrag{bp}[lb][lb]{$b_p$}
    \psfrag{bc1}[lb][lb]{$b_{c_1}$}
    \psfrag{bc2}[lb][lb]{$b_{c_2}$}
    \psfrag{bc3}[lb][lb]{$b_{c_3}$}
    \psfrag{bc4}[lb][lb]{$b_{c_4}$}
	\psfrag{avc1}[lb][lb]{$a_{v,c_1}$}	
	\psfrag{a'vp}[lb][lb]{$a'_{v,p}$}	
	\psfrag{awc2}[lb][lb]{$a_{u,c_2}$}	
	\psfrag{a'wc4}[lb][lb]{$a'_{u,c_4}$}	
\includegraphics[scale=0.5]{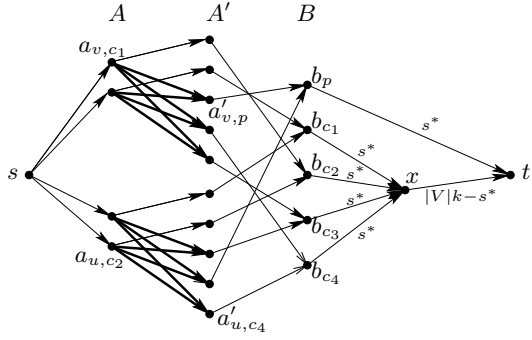}
\end{center}
\caption{The network for a small instance with $k=2$, having 5 candidates and 2 voters. 
The votes of the illustrated instance are $v: c_1 \succ c_2 \succ p \succ c_4 \succ c_3$ 
and $u: c_1 \succ c_2 \succ c_3 \succ p \succ c_4$.
%
Each unlabeled edge has capacity $1$; otherwise labels correspond to capacities. 
Bold edges correspond to bribing voters 	 
in order to move a point from one candidate to another candidate.
The costs of such edges depend on the given votes, e.g. 
$w(a_{v,c_1} a'_{v,p})=1$ and $w(a_{u,c_2} a'_{u,c_4})=3$. All other edges have cost $0$.
}
\label{fig_flow}
\end{figure}

Given a value $s^*$, we answer the above question by solving a corresponding minimum cost maximum flow problem. 
We will define a network $N=(G,s,t,g,w)$ on a directed graph $G=(D,E)$ with a source vertex $s$ and a target vertex $t$, 
where $g$ denotes the capacity function and $w$ the cost function defined on $E$.
See Figure~\ref{fig_flow} for an illustration of the network.
First, we introduce the vertex sets 

\vspace{2pt}
\begin{tabular}{l}
$A = \{ a_{v,c} \mid v \in V, c \in C, \textrm{rank}(c,v) \leq k \}$, \\
$A' = \{ a'_{v,c} \mid v \in V, c \in C \}$, and \\
$B = \{ b_c \mid c \in C \}$, 
\end{tabular}
\vspace{2pt}

\noindent
and we set $D=\{s,t,x \} \cup A \cup A' \cup B$.
We define the arcs $E$ as the union of the sets

\vspace{2pt}
\begin{tabular}{l}
$E_S = \{ sa \mid a \in A\}$, \\
$E_A = \{ a_{v,c} a'_{v,c} \mid \textrm{rank}(c,v) \leq k \}$, \\
$E_{A'} = \{ a_{v,c} a'_{v,c'} \mid \textrm{rank}(c,v) \leq k , \textrm{rank}(c',v) > k \}$,  \\
$E_B = \{ a'_{v,c} b_c \mid v \in V, c \in C \}$,  \\
$E_X = \{b_c x \mid c \in C, c \neq p \}$, 
\end{tabular}
\vspace{2pt}

\noindent
plus the arcs $b_{p} t$ and $xt$. 
We set the cost function $w$ to be $0$ on each arc except for the arcs of $E_{A'}$, and we set 
$w(a_{v,c} a'_{v,c'}) = \textrm{rank}(c',v) - \textrm{rank}(c,v)$.
We let the capacity $g$ be $1$ on the arcs of $E_S \cup E_A \cup E_{A'} \cup E_B$, 
we set it to be $s^*$ on the arcs of $E_X \cup \{ b_{p} t \}$, 
and we set $g(xt)=|V|k-s^*$. 

The soundness of the algorithm and hence the theorem itself follows from the following observation:
there is a flow of value $|V|k$ on $N$ having total cost at most~$\beta$ 
if and only if there exists a set~$\Gamma$ of swaps with total cost at most $\beta$ such that
$\mathrm{score}(p,V^{\Gamma}) = s^*$ and $\mathrm{score}(c,V^{\Gamma}) \leq s^*$ for any $c \in C, c \neq p$.

First, suppose that such a flow~$f$ exists. 
Since all capacities and costs are integrals, we know that $f$ is integral as well. 
For each vote $v \in V$, we define a set of swaps on~$v$ as follows.
We define two sets $X^{\rightarrow}(v)$ and $X^{\leftarrow}(v)$ in a way that
if $f(a_{v,c} a'_{v,c'})=1$ holds for some $c$ and $c'$ with $c \neq c'$, then we put $c$ into $X^{\rightarrow}(v)$
and we put $c'$ into $X^{\leftarrow}(v)$. 
Clearly, $|X^{\rightarrow}(v)| = |X^{\leftarrow}(v)|$, by the given capacities.
Observe that moving the candidates in $X^{\rightarrow}(v)$ to the positions $k+1, k+2, \dots, k+h$ and also the 
candidates in $X^{\leftarrow}(v)$ to the positions $k, k-1, \dots, k-h+1$ for $h = |X^{\rightarrow}(v)|$ 
has total cost $\sum_{c' \in X^{\leftarrow}(v)} \mathrm{rank}(c',v) - \sum_{c \in X^{\rightarrow}(v)} \mathrm{rank}(c,v)$.
Thus, by letting $\Gamma(v)$ contain these swaps for some $v$, we know that the cost of the bribery 
$\Gamma = \{ \Gamma(v) \mid v \in V \}$ is exactly the cost of the flow $f$ which is not more than $\beta$. 
Observe that as a result of these swaps, each candidate $c$ other than $p$ will 
receive at most $s^*$ scores in $V^{\Gamma}$ because of the capacity $g(b_c x) \leq s^*$.
On the other hand, by $g(xt) = |V|k-s^*$ we get $f(b_{p}t)=s^*$, 
which yields that $p$ will receive exactly $s^*$ scores in $V^{\Gamma}$.
Thus, $\Gamma$ has the properties claimed.

For the converse direction, let $\Gamma$ be a set of swaps with total cost at most $\beta$ such that
$\mathrm{score}(p,V^{\Gamma}) = s^*$ and $\mathrm{score}(c,V^{\Gamma}) \leq s^*$ for any $c \in C, c \neq p$.
For some $v \in V$, let $X^{\rightarrow}(v)$ denote those candidates $c$ for which 
$\mathrm{score}(c,v)>\mathrm{score}(c,v^{\Gamma})$, and let 
$X^{\leftarrow}(v)$ denote those candidates $c'$ for which $\mathrm{score}(c',v)<\mathrm{score}(c',v^{\Gamma})$. 
It is easy to see that the swaps applied in $v$ by $\Gamma$ have total cost at least 
$\sum_{c \in X^{\leftarrow}(v)} \mathrm{rank}(c',v) - \sum_{c \in X^{\rightarrow}(v)} \mathrm{rank}(c,v)$. 
Therefore, a flow can be easily constructed having cost at most $\beta$ 
in the following way: for each $v$ and~$c$ where $\mathrm{score}(c,v)=\mathrm{score}(c, v^{\Gamma})=1$ we let 
$f(a_{v,c} a'_{v,c})=1$, and for each  $v$ and $c$ where $c$ is the $i$-th candidate in $X^{\rightarrow}(v)$
we set  $f(a_{v,c} a'_{v,c'})=1$ for the $i$-th candidate $C'$ of $X^{\leftarrow}(v)$ according to some fixed ordering.
It is not hard to verify that this indeed determines a flow for $N$, with value $|V|k$ and cost at most $\beta$.  
\qed
\end{proof}

Note that Theorem~\ref{uniform} implies a polynomial-time approximation algorithm 
for \textsc{Swap Bribery} for $k$-approval with approximation ratio $\delta$, 
if all costs lie within the range $[1, \delta]$ for some $\delta \geq 1$.

Proposition~\ref{costs_01} shows the connection between {\sc Swap Bribery} and {\sc Possible Winner}. 
This result is an easy consequence of a reduction given by Elkind et al. \cite{EFS09}. 

\begin{myprop}
\label{costs_01}
The special case of \textsc{Swap Bribery} where the costs are in $\{0,\delta\}$ for some~$\delta>0$ and the budget is zero is equivalent to the \textsc{Possible Winner} problem. 
\end{myprop}

\begin{proof}
It has already been proved by Elkind et al. \cite{EFS09}
that \textsc{Possible Winner} reduces to \textsc{Swap Bribery} if the possible costs include $0$ and $1$, and 
the budget is zero. 
Clearly, the result also holds if we assume that the costs include $0$ and $\delta$ for some $\delta>0$. 

For the other direction, it is easy to see that a  \textsc{Swap Bribery} instance with costs in $\{0,\delta\}$, $\delta>0$ 
and budget zero is equivalent to the \textsc{Possible Winner} instance with the same candidates where each vote $v$ is replaced 
by the transitive closure of the relation $\succ_v$ for which $a \succ_v b$ holds 
if and only if $a$ precedes $b$ in the vote $v$ and the cost of swapping $a$ with $b$ in $v$ is non-zero.
\qed
\end{proof}

As a corollary, \textsc{Swap Bribery} with costs in $\{0,\delta\}$, $\delta>0$ and 
budget zero is NP-complete for almost all election systems based on scoring vectors~\cite{BD10},
and also for the voting rules Copeland~\cite{LX08} and Maximin~\cite{LX08}. 
For many voting systems such as $k$-approval, Borda, and Bucklin, 
it is NP-complete even for a fixed number of votes \cite{BHN09}.
A further consequence of Proposition~\ref{costs_01}, 
contrasting the polynomial-time approximation algorithm implied by Theorem~\ref{uniform}, 
is the fact that approximating {\sc Swap Bribery} with an arbitrary factor 
in a setting where zero costs are allowed
is NP-hard for all voting rules where the {\sc Possible Winner} problem is NP-hard.
This has been observed by Elkind and Faliszewski~\cite{EF-approx} as well.

We now turn our attention to the simplest case among those where the cost function is not a constant, i.e. 
where only two different positive costs are possible. 
Theorem~\ref{thm_2approval} shows that the corresponding problem is hard already for $2$-approval.

\begin{theorem}
\label{thm_2approval} Suppose that~$\varepsilon > 0$. \\
(1) \textsc{Swap Bribery} for 2-approval with costs in~$\{1,1+\varepsilon\}$ is NP-complete. \\
(2) \textsc{Swap Bribery} for 2-approval with costs in~$\{1,1+\varepsilon\}$ is W[1]-hard, 
if the parameter is the budget~$\beta$, or equivalently, the maximum number of swaps allowed.
\end{theorem}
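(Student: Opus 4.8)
The plan is to give a single polynomial-time parameterized reduction from \textsc{Multicolored Clique}, which is both NP-hard and W[1]-hard with respect to the number of color classes~$k$; mapping an instance with~$k$ colors to a \textsc{Swap Bribery} instance for 2-approval whose budget~$\beta$ is a function of~$k$ alone then yields both statements at once. First I record the mechanism that makes the budget meaningful. Since every swap costs either~$1$ or~$1+\varepsilon$, any solution of cost at most~$\beta$ uses at most~$\beta$ swaps, so the budget and the maximum number of permitted swaps are interchangeable as parameters, which justifies the parenthetical in~(2). More precisely, I choose $\beta = S + T\varepsilon$ with integers $S,T$ and with $\varepsilon$ small enough that $\beta < S+1$; a solution of cost at most~$\beta$ then consists of \emph{exactly} $S$ swaps, of which at most~$T$ have cost~$1+\varepsilon$. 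Thus~$S$ forces a prescribed number of swaps while the $\varepsilon$-slack~$T$ acts as a counter tolerating only a bounded number of ``expensive'' moves, and encoding each inconsistent choice as an expensive swap is what lets the budget separate cliques from non-cliques.

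On top of this I build two kinds of gadgets, recalling that in 2-approval only a swap across the boundary between the second and third position changes any score, transferring one point from the candidate leaving the one-positions to the one entering them. A \emph{selection gadget} for each color class~$V_i$ is arranged so that, in order for~$p$ to reach the common winning threshold, a budget-respecting bribery must perform for each~$i$ exactly one cheap boundary swap designating a single vertex candidate of~$V_i$; these~$k$ swaps encode the choice of one vertex per class. A \emph{consistency gadget} for each pair $\{i,j\}$ is arranged so that the two designated vertices of classes~$i$ and~$j$ can be reconciled by a cheap swap precisely when the corresponding edge is present in~$G$, whereas a non-adjacent pair can be reconciled only by paying extra (a swap of cost~$1+\varepsilon$ or one additional swap), pushing the total above~$\beta$. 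Setting~$S$ to the total number of selection plus reconciliation swaps, namely $S = k + \binom{k}{2}$, and $T=0$, a budget-respecting bribery exists exactly when all $\binom{k}{2}$ reconciliations are cheap, i.e.\ exactly when the designated vertices form a multicolored clique. Membership in NP for part~(1) is immediate, since a swap set can be guessed and verified in polynomial time.

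Two points require care and form the technical core. First, because every boundary swap \emph{transfers} a point rather than merely deleting one, each designation of a vertex candidate has a side effect on a neighboring candidate's score; the votes and the roles of auxiliary ``gatekeeper'' candidates must be laid out so that these side effects cancel out exactly when one vertex per class is chosen, and so that $p$ meets the threshold if and only if the selection is consistent. Second, and this is the main obstacle, the whole construction must keep the number of forced swaps, and hence~$\beta$, bounded by a function of~$k$ alone, even though the numbers of candidates and votes are polynomial in the size of~$G$. This forces every selection and every adjacency check to be realizable with $O(1)$ swaps, so the vertex candidates must start adjacent to the scoring boundary and the adjacency test must be purely local; ruling out ``mixed'' briberies that designate several vertices in one class, or otherwise exploit the gadgets in unintended ways, is the delicate part of the correctness argument.

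Finally I would verify both directions. From a multicolored clique I read off the~$k$ selection swaps and the $\binom{k}{2}$ cheap reconciliation swaps, and check that~$p$ becomes a winner at total cost exactly $S \le \beta$. Conversely, from any solution of cost at most~$\beta$ I invoke the exact-swap-count consequence of the choice $\beta = S + T\varepsilon$ to conclude that it designates exactly one vertex per class and pays for no expensive reconciliation, which by the gadget design means the designated vertices are pairwise adjacent and hence form a clique. As the reduction runs in polynomial time and sends the parameter~$k$ to $\beta = f(k)$, it establishes both the NP-completeness in~(1) and the W[1]-hardness with respect to~$\beta$ in~(2).
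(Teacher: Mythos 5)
Your high-level strategy is the same as the paper's (a single parameterized reduction from \textsc{Multicolored Clique} with budget $\beta=f(k)$, where swaps of cost $1+\varepsilon$ mark the illegal moves), but two mechanisms you lean on are unsound as stated, and the core of the proof is missing. First, you propose to ``choose $\varepsilon$ small enough that $\beta<S+1$'': $\varepsilon$ is fixed by the theorem statement, not by the reduction, so the construction must work for every given $\varepsilon>0$, including $\varepsilon\geq 1$ and $\varepsilon$ arbitrarily small. Second, even with $T=0$ and $\beta=S$, it is false that cost $\leq\beta$ forces ``exactly $S$ swaps, none expensive'': a bribery using fewer than $S$ swaps can afford expensive ones whenever $\varepsilon$ is small (e.g.\ $S-1$ swaps of cost $1+\varepsilon$ fit under $\beta=S$ once $\varepsilon<1/(S-1)$). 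The conclusion you want only follows from a \emph{lower bound} proved from the gadget structure: any bribery making $p$ a winner needs at least $\beta$ swaps, whence cost $\leq\beta$ forces exactly $\beta$ swaps, each of cost exactly $1$, uniformly in $\varepsilon$. This is precisely what the paper establishes, with $\beta=k^3+10k^2$, by showing that at least $k^3+6k^2$ swaps are needed in the selection votes and at least $4k^2$ in the checking votes.

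The larger gap is that the construction carrying this lower bound is absent, and your target budget $S=k+\binom{k}{2}$ is very unlikely to be attainable. In 2-approval a swap only transfers a single point across the boundary of a single vote, so the identity of the vertex selected for class $i$ must be physically carried, point by point, from the selection gadget to each of the $k-1$ pair-checking votes; it cannot be tested by one ``purely local'' swap per pair while remaining consistent across pairs. The paper implements this transport with chains of candidates ($b$, $\widetilde{c}$, $c$, $f$, $h$, and transporters), at $\Theta(k)$ swaps per transferred point and $\Theta(k^3)$ overall, and then needs guards and dummies to freeze all other scores, plus a genuinely delicate counting argument (the three transfer cases with identical swap counts $6+2(k-i)$, the measure comparison $\mu(I_H)\leq\mu(I_B)$ ruling out shifted transfers, and the iterative argument that $b^i_x\in B^*$ forces $b^h_x\in B^*$ for all $h<i$) exactly to exclude the ``mixed'' and unintended briberies that you flag as the delicate part but do not resolve. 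Since both directions of the equivalence rest on these gadgets and on the exact swap-count accounting, what you have is a correct plan whose hard steps --- the ones constituting the actual proof --- remain open.
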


\begin{proof}
We present a reduction from the \textsc{Multicolored Clique} problem. 
Let~$\mathcal{G}=(V,E)$ with the~$k$-partition~$V=V_1 \cup V_2 \cup \dots \cup  V_k$ be 
the given instance of \textsc{Multicolored Clique}. 
Here and later, we write~$[k]$ for~$\{1,2, \dots, k\}$.
For each~$i \in [k]$,~$x \in V_i$, and~$j \in [k] \setminus \{i\}$ we let 
$E_x^{j} = \{ xy \mid y \in V_j, xy \in E\}$.
We construct an instance~$I_{\mathcal{G}}$ of \textsc{Swap Bribery} as follows.

The set~$\mathcal{C}$ of candidates will be the union of the sets 
$A, B, C, \wt{C}, F, H, \wt{H}, M, \wt{M}, D, G, T$, and~$\{p,r \}$, Table~\ref{table_candidates}
shows the exact definition of these sets.
Our preferred candidate is~$p$. The sets~$D$, $G$, and $T$
will contain \emph{dummies}, \emph{guards}, and \emph{transporters}, respectively. 
Our budget will be~$\beta=k^3+10k^2$.
Regarding the indices~$i$ and~$j$, we suppose~$i, j \in [k]$ if not stated otherwise.

\begin{table}[t]
\begin{center}
\begin{tabular}{l@{\hspace{10pt}}l}
candidate set & cardinality  \\
\noalign{\hrule}
\\[-8pt]
$A = \{a^{i,j} \mid i, j \in [k] \}$ 			& $|A|=k^2$ 	\\
$B = \{b_v^j \mid j \in [k], v \in V\}$			& $|B|=k|V|$	\\
$C = \{c_v^j \mid j \in [k], v \in V\}$ 		& $|C|=k|V|$	\\
$\wt{C} = \{\wt{c}_v^j \mid j \in [k], v \in V\}$ & $|\wt{C}|=k|V|$	\\
$F = \{f_v^j \mid j \in [k], v \in V\}$			& $|F|=k|V|$	\\
$H = \{h_v^j \mid j \in [k], v \in V\}$			& $|V|=k|V|$	\\
$\wt{H} = \{h_v^j \mid j \in [k], v \in \bigcup_{i<j} V_i \}$	& $|\wt{H}|=\sum_{j=1}^{k} (k-j)|V_j| < k|V|$	\\
$M = \{m^{i,j} \mid 1 \leq i \leq j \leq k \}$	& $|M|=\binom{k}{2}+k$	\\
$\wt{M} = \{\wt{m}^{i,j} \mid 1 \leq i < j \leq k \}$	& $|\wt{M}|=\binom{k}{2}$	\\
$D=\{ d_1 , d_2, \dots \}$						& $|D| \leq |W| = O(k^3|V|^2)$	\\
$G =\{ g_1, g_2, \dots, g_{\beta+2} \}$			& $|G|=\beta+2=k^3+10k^2$	\\
$T=\{ t_1, t_2, \dots \}$ 						& $|T| =O(k^2|V|)$	
\\[1mm]
\noalign{\hrule}
\end{tabular}
\end{center}
\caption{The candidate sets constructed in the proof of Theorem~\ref{thm_2approval}.}
\label{table_candidates}
\end{table}

The set of votes will be~$W= W_G \cup W_I \cup W_S \cup W_C$. 
Votes in~$W_G$ will define guards (explained later),
votes in~$W_I$ will set the initial scores, 
votes in~$W_S$ will represent the selection of~$k$ vertices,
and finally, votes in~$W_C$ will be responsible for checking that the 
selected vertices are pairwise neighboring.
We construct~$W$ such that the following will hold
for some (even) integer~$K$, determined later:

\vspace{2pt}
\begin{tabular}{l}
\label{init_scores}
$\mathrm{score}(r,W)=0$, \\
$\mathrm{score}(a,W)=K+1$ for each~$a \in A$, \\
$\mathrm{score}(q,W)=K$ for each~$q \in  \mathcal{C} \setminus (A \cup D \cup \{r\})$, \\
$\mathrm{score}(d,W) \leq 1$ for each~$d \in D$.
\end{tabular} 
\vspace{2pt}

We define the cost function~$c$ such that each swap has cost~$1$ or~$1+\varepsilon$. 
We will define each cost to be~$1$ if not explicitly stated otherwise.
Since each cost is at least~$1$, none of the candidates ranked after 
the position~$\beta+2$ in a vote~$v$ can receive non-zero points in~$v$
without violating the budget. 
Thus, we can represent votes by listing only their first~$\beta+2$ positions.
A candidate does not \emph{appear} in some vote, if 
he is not contained in these positions.

{\bf Dummies, guards, truncation, and transporters.}
First, let us clarify the concept of dummy candidates: 
we will ensure that no dummy can receive more than one point in total, by 
letting each~$d \in D$ appear in exactly one vote.
Since we will use at most one dummy in each vote,
this can be ensured easily by using at most~$|W|$ dummies in total.
We will use the sign~$\ast$ to denote dummies in votes.

Now, we define~$\beta+2$ guards using the votes~$W_G$.
We let~$W_G$ contain votes of the form~$w_G(h)$ for each~$h \in [\beta+2]$, 
each such vote having multiplicity~$K/2$ in~$W_G$. 
We let~$w_G(h)=(g_h, g_{h+1}, g_{h+2}, \dots, g_{\beta+2}, g_1, g_2, \dots g_{h-1})$~\footnote{For convenience, here we use the vector-style representation of the 
linear orders given by the voters.}
Note that~$\mathrm{score}(g,W_G)=K$ for each~$g \in G$, 
and the total score obtained by the guards in~$W_G$ cannot decrease.
As we will make sure that $p$ cannot receive more 
than~$K$ points without exceeding the budget, 
this yields that in any possible solution, each guard must have score exactly~$K$.

Using guards, we can \emph{truncate} votes at any position~$h > 2$ by 
putting arbitrarily chosen guards at the positions~$h, h+1, \dots, \beta+2$. 
This way we ensure that only candidates on the first~$h-1$ positions can receive a point in this vote. 
We will denote truncation at position~$h$ by using a sign~$\dagger$ at that position.

Sometimes we will need votes which ensure that some candidate 
$q_1$ can ``transfer'' one point to some candidate~$q_2$ using a cost of~$c$ from the budget 
($c \in \mathbb{N}^+$, $q_1,q_2 \in \mathcal{C} \setminus (D \cup D \cup G)$).
In such cases, we construct~$c$ votes
by using exactly~$c-1$ \emph{transporter} candidates, say~$t_1, t_2, \dots, t_{c-1}$, 
none of which appears in any other vote in $W \setminus W_I$. 
The constructed votes are as follows: 
for each~$h \in [c-2]$ we add a vote~$(\ast, t_h, t_{h+1}, \dagger)$, and 
we also add the votes~$(\ast, q_1, t_1, \dagger)$ and~$(\ast, t_{c-1}, q_2, \dagger)$. 
We let the cost of any swap here be~$1$, and we denote the obtained set of votes by~$q_1 \leadsto^c q_2$.
(Note that~$q_1 \leadsto^1 q_2$ only consists of the vote~$(\ast, q_1, q_2, \dagger)$.)

Observe that the votes~$q_1 \leadsto^c q_2$ ensure that~$q_1$ can transfer one point to~$q_2$ at cost~$c$. 
Later, we will make sure~$\mathrm{score}(t,W)=K$ for each transporter~$t \in T$.
Thus, no transporter can increase its score in a solution, and~$q_1$ only loses a point in these votes if~$q_2$ gets one.

{\bf Setting initial scores.}
Using dummies and guards, we define~$W_I$ to adjust the initial scores of the relevant candidates as follows.
We put the following votes into~$W_I$:

\vspace{2pt}
\begin{tabular}{p{6cm}l}
$(p, \ast, \dagger)$ with multiplicity~$K$, & \\
$(a^{i,j}, \ast, \dagger)$ with multiplicity~$K+1-|V_j|$ & for each~$i,j \in [k]$,  \\
$(h^i_x, \ast, \dagger)$ with multiplicity~$K-|E_x^i|$ &
	for each~$i \in [k], x \in \bigcup_{i<j} V_j$, \\
$(\wt{h}^i_x, \ast, \dagger)$ with multiplicity~$K-|E_x^i|$ &
	for each~$i \in [k], x \in \bigcup_{i>j} V_j$, \\
$(m^{i,j}, \ast, \dagger)$ with multiplicity~$K-2$ & for each~$i<j$,  \\
$(q, \ast, \dagger)$ with multiplicity~$K-1$ &
	for each remaining~$q \notin D \cup G \cup \{r\}$.
\end{tabular}
\vspace{2pt}

The preferred candidate~$p$ will not appear in any other vote, implying~$\score(p,W)=K$.
 
{\bf Selecting vertices.}
The set~$W_S$ consists of the following votes:

\vspace{2pt}
\begin{tabular}{p{6cm}l}
$a^{i,j} \leadsto^1 b_x^i$ & for each~$i,j \in [k]$ and~$x \in V_j$, \\
$b_x^1 \leadsto^2 \wt{c}_x^1$ & for each~$x \in V$, \\
$w_S(i,x)=(b_x^i, c_x^{i-1}, \wt{c}_x^i, f_x^{i-1}, \dagger)$ & for each~$2 \leq i \leq k$,~$x \in V$, \\
$c^{k}_x \leadsto^2 f_x^k$ & for each~$x \in V$, \\
$\wt{c}_x^i \leadsto^1 c_x^i$ & for each~$i \in [k], x \in V$, and  \\
$f_x^i \leadsto^{2(k-i)+1} h_x^i$ & for each~$i \in [k], x \in V$.
\end{tabular}
\vspace{2pt}

Swapping candidate~$b_x^i$ with~$c_x^{i-1}$, and 
swapping candidate~$\wt{c}_x^i$ with~$f_x^{i-1}$ in~$w_S(i,x)$ for some~$2 \leq i \leq k$,~$x \in V$ 
will have cost~$1+\varepsilon$. 

{\bf Checking incidency.}
The set~$W_C$ will contain the votes

\vspace{2pt}
\begin{tabular}{p{6cm}l}
$h_x^i \leadsto^1 \wt{h}_x^i$ & for each~$i \in [k]$,~$x \in \bigcup_{i>j} V_j$, \\
$w_C(i,j,y,x) = (h_x^i, \wt{h}_y^j, \wt{m}^{i,j}, m^{i,j}, \dagger)$ 
	& for each~$i < j$,~$x \in V_j$,~$y \in V_i$,~$xy \in E$, \\
$\wt{m}^{i,j} \leadsto^1 m^{i,j}$ & for each~$i<j$, \\
$h_x^i \leadsto^3 m^{i,i}$ & for each~$i \in [k]$,~$x \in V_i$, and \\
$m^{i,j} \leadsto^1 r$ with multiplicity~$2$ & for each~$i<j$, \\
$m^{i,i} \leadsto^1 r$ & for each~$i \in [k]$.
\end{tabular}	
\vspace{2pt}

Again, swapping candidate~$h_x^i$ with~$\wt{h}_y^j$, 
and also candidate~$\wt{m}^{i,j}$ with~$m^{i,j}$ in a vote of the form~$w_C(i,j,y,x)$ 
will have cost~$1+\varepsilon$.

\begin{figure}[t]
\begin{center}
    \psfrag{a1}[lB][lB]{$a^{1,j}$}
    \psfrag{a2}[lB][lB]{$a^{2,j}$}
    \psfrag{aj}[lB][lB]{$a^{j,j}$}
    \psfrag{aj+}[lB][lB]{$a^{j+1,j}$}
    \psfrag{aj++}[lB][lB]{$a^	{j+2,j}$}
    \psfrag{ak}[lB][lB]{$a^{k,j}$}
    \psfrag{b1}[lB][lB]{$b^1_x$}
    \psfrag{b2}[lB][lB]{$b^2_x$}
    \psfrag{bj}[lB][lB]{$b^j_x$}
    \psfrag{bj++}[lB][lB]{$b^{j+2}_x$}
    \psfrag{bj+}[lB][lB]{$b^{j+1}_x$}
    \psfrag{bk}[lB][lB]{$b^k_x$}
    \psfrag{c1}[lB][lB]{$c^1_x$}
    \psfrag{cj-}[lB][lB]{$c^{j-1}_x$}
    \psfrag{cj}[lB][lB]{$c^j_x$}
    \psfrag{cj+}[lB][lB]{$c^{j+1}_x$}
    \psfrag{ck-}[lB][lB]{$c^{k-1}_x$}
    \psfrag{ck}[lB][lB]{$c^k_x$}
    \psfrag{c'1}[lB][lB]{$\wt{c}^1_x$}
    \psfrag{c'2}[lB][lB]{$\wt{c}^2_x$}
    \psfrag{c'j}[lB][lB]{$\wt{c}^j_x$}
    \psfrag{c'j+}[lB][lB]{$\wt{c}^{j+1}_x$}
    \psfrag{c'j++}[lB][lB]{$\wt{c}^{j+2}_x$}
    \psfrag{c'k}[lB][lB]{$\wt{c}^k_x$}
    \psfrag{f1}[lB][lB]{$f^1_x$}
    \psfrag{fj-}[lB][lB]{$f^{j-1}_x$}
    \psfrag{fj}[lB][lB]{$f^j_x$}
    \psfrag{fj+}[lB][lB]{$f^{j+1}_x$}
    \psfrag{fk-}[lB][lB]{$f^{k-1}_x$}
    \psfrag{fk}[lB][lB]{$f^k_x$}
    \psfrag{h1}[lB][lB]{$h^1_x$}
    \psfrag{hj-}[lB][lB]{$h^{j-1}_x$}
    \psfrag{hj}[lB][lB]{$h^j_x$}
    \psfrag{hj+}[lB][lB]{$h^{j+1}_x$}
    \psfrag{hk-}[lB][lB]{$h^{k-1}_x$}
    \psfrag{hk}[lB][lB]{$h^k_x$}
    \psfrag{h'j+}[lB][lB]{$\wt{h}^{j+1}_x$}
    \psfrag{h'k-}[lB][lB]{$\wt{h}^{k-1}_x$}
    \psfrag{h'k}[lB][lB]{$\wt{h}^k_x$}
    \psfrag{m'1}[lB][lB]{$\wt{m}^{1,j}$}
    \psfrag{m'j-}[lB][lB]{$\wt{m}^{j-1,j}$}
    \psfrag{m1}[lB][lB]{$m^{1,j}$}
    \psfrag{mj-}[lB][lB]{$m^{j-1,j}$}
    \psfrag{mj}[lB][lB]{$m^{j,j}$}
    \psfrag{mj+}[lB][lB]{$m^{j+1,j}$}
    \psfrag{mk-}[lB][lB]{$m^{k-1,j}$}
    \psfrag{mk}[lB][lB]{$m^{k,j}$}
    \psfrag{r}[lB][lB]{$r$}
	\psfrag{1}[lB][lB]{$ \scriptstyle 1$}
	\psfrag{1}[lB][lB]{$ \scriptstyle 2$}
	\psfrag{1}[lB][lB]{$ \scriptstyle 3$}
	\psfrag{1}[lB][lB]{$ \scriptstyle 4$}	
\includegraphics[scale=0.6]{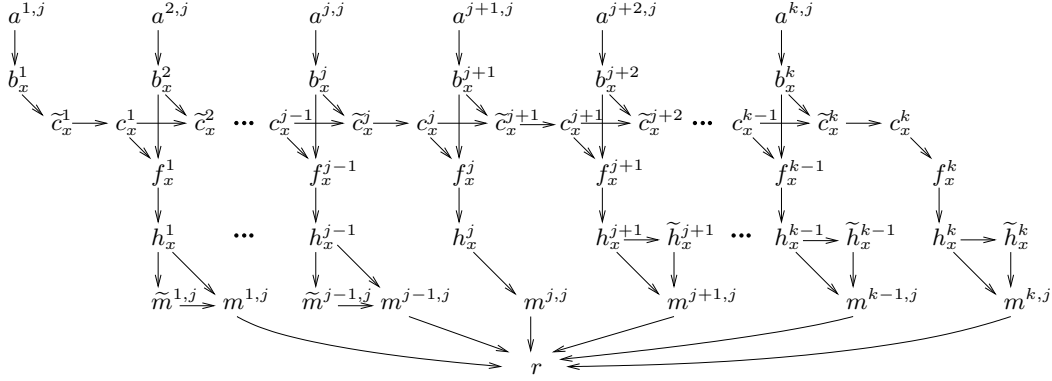}
\end{center}
\caption{Part of the instance~$I_{\mathcal{G}}$ 
in the proof of Theorem~\ref{thm_2approval}, 
assuming $x \in V_j$ in the figure. 
An arc goes from~$q_1$ to~$q_2$ if~$q_1$ can transfer a point to~$q_2$ using one or several swaps. 
}
\label{fig_instance}
\end{figure}

It remains to define~$K$ properly. To this end, we let~$K \geq 2$ be the minimum even integer not smaller than the integers 
in the set~$\{|E_x^j| \mid j \in [k], x \notin V_j\} \cup \{|V_i| \mid i \in [k]\} \cup \{k^2\}$. 
This finishes the construction.
It is straightforward to verify that the initial scores of the candidates are as claimed above.
The constructed instance is illustrated in Figure~\ref{fig_instance}.

{\bf Construction time.}
Note~$|W_G|=(\beta+2)K/2$, 
$|W_I|=O(Kk^2+Kk|V|)=O(Kk|V|)$, $|W_S|=O(k^2|V|)$, and~$|W_C|=O(k|V|+|E|)$. 
Hence, the number of votes is polynomial in the size of the input graph~$\mathcal{G}$. 
This also implies that the number of candidates is polynomial as well, 
and the whole construction takes polynomial time.
Note also that~$\beta$ is only a function of~$k$, hence this yields a parameterized reduction as well.

If for some vote~$v$, exactly one candidate~$q_1$ gains a point and 
exactly one candidate~$q_2$ loses a point as a result of the swaps in~$\Gamma$, then we say that 
$q_2$ \emph{sends} one point to~$q_1$, or equivalently, 
$q_1$ \emph{receives} one point from~$q_2$ in~$v$ according to~$\Gamma$.
Also, if~$\Gamma$ consists of swaps that transform a vote~$(a,b,c,d, \dagger)$ into a 
vote~$(c,d,a,b,\dagger)$, then we say that~$a$ sends one point to~$c$, and~$b$ sends one point to~$d$.
A point is \emph{transferred} from~$q_1$ to~$q_2$ in~$\Gamma$, if it is sent from~$q_1$ to~$q_2$
possibly through some other candidates.

Our aim is to show the following:~$\mathcal{G}$ has a~$k$-clique if and only if 
the constructed instance~$I_{\mathcal{G}}$ is a yes-instance of \textsc{Swap Bribery}. 
This will prove both (1) and (2).

{\bf Direction~$\Longrightarrow$.} 
Suppose that~$\mathcal{G}$ has a clique consisting of the vertices~$x_1, x_2, \dots, x_k$ with~$x_i \in V_i$.
We are going to define a set~$\Gamma$ of swaps transforming~$W$ into~$W^{\Gamma}$ 
with total cost~$\beta$ such that~$p$ wins in~$W^{\Gamma}$ according to 2-approval. 

First, we define the swaps applied by~$\Gamma$ in~$W_S$:
\begin{itemize}
\item Swap~$a^{i,j}$ with~$b^i_{x_j}$ for each~$i$ and~$j$ in~$a^{i,j} \leadsto^1 b^i_{x_j}$.
Cost:~$k^2$.
\item Transfer one point from~$b^1_{x_j}$ to~$\wt{c}^1_{x_j}$ for each~$j \in [k]$ in~$b_{x_j}^1 \leadsto^2 \wt{c}_{x_j}^1$.
Cost:~$2k$.
\item Apply four swaps in each 
vote~$w_S(i,x_j)=(b_{x_j}^i, c_{x_j}^{i-1}, \wt{c}_{x_j}^i, f_{x_j}^{i-1}, \dagger)$ transforming it
to $(\wt{c}_{x_j}^i, f_{x_j}^{i-1}, b_{x_j}^i, c_{x_j}^{i-1}, \dagger)$, 
sending one point from~$b_{x_j}^i$ to~$\wt{c}_{x_j}^i$ and simultaneously, also one point from~$c_{x_j}^{i-1}$ to~$f_{x_j}^{i-1}$.
Cost:~$4k(k-1)$. 
\item Swap~$\wt{c}^i_{x_j}$ with~$c^i_{x_j}$ for each~$i,j \in [k]$ in~$\wt{c}_{x_j}^i \leadsto^1 c_{x_j}^i$.
Cost:~$k^2$.
\item Transfer one point from~$c^k_{x_j}$ to~$f^k_{x_j}$ for each~$j \in [k]$ in~$c^k_{x_j} \leadsto^2 f^k_{x_j}$.
Cost:~$2k$.
\item Transfer one point from~$f^i_{x_j}$ to~$h^i_{x_j}$ for each~$i,j \in [k]$ in~$f^i_{x_j} \leadsto^{2(k-i)+1} h^i_{x_j}$.
Cost:~$k^3$.
\end{itemize}
The above swaps transfer one point 
from~$a^{i,j}$ to~$h^i_{x_j}$ via the candidates~$b^i_{x_j}$, $\wt{c}^i_{x_j}$, $c^i_{x_j}$, and~$f^i_{x_j}$ for each~$i$ and~$j$.
These swaps of~$\Gamma$, applied in the votes~$W_S$, have total cost~$k^3 +6k^2$. 

Now, we define the swaps applied by~$\Gamma$ in the votes~$W_C$. 
\begin{itemize}
\item Swap~$h_{x_j}^i$ with~$\wt{h}_{x_j}^i$  for each~$j<i$ in~$h_{x_j}^i \leadsto^1 \wt{h}_{x_j}^i$.
Cost:~$k(k-1)/2$.
\item Apply four swaps in each 
vote~$w_C(i,j,x_i,x_j) = (h_{x_j}^i, \wt{h}_{x_i}^j, \wt{m}^{i,j}, m^{i,j}, \dagger)$  transforming it
to~$(\wt{m}^{i,j}, m^{i,j}, h_{x_j}^i, \wt{h}_{x_i}^j, \dagger)$, 
sending one point from~$h_{x_j}^i$ to~$\wt{m}^{i,j}$ and, simultaneously, also one point from~$\wt{h}_{x_i}^j$ to~$m^{i,j}$.
Note that~$w_C(i,j,x_i,x_j)$ is indeed defined for each~$i$ and~$j$, 
since~$x_i$ and~$x_j$ are neighboring.
Cost:~$2k(k-1)$. 
\item Swap~$\wt{m}^{i,j}$ with~$m^{i,j}$ for each~$i<j$ in~$\wt{m}^{i,j} \leadsto^1 m^{i,j}$. 
Cost:~$k(k-1)/2$.
\item Transfer one point from~$h_{x_i}^i$ to~$m^{i,i}$ for each~$i \in [k]$ in~$h_{x_i}^i \leadsto^3 m^{i,i}$.
Cost:~$3k$.
\item Swap~$m^{i,j}$ with~$r$ in both of the votes~$m^{i,j} \leadsto^1 r$ 
for each~$i<j$. 
Cost:~$k(k-1)$.
\item Swap~$m^{i,i}$ with~$r$ for each~$i \in [k]$ in~$m^{i,i} \leadsto^1 r$. 
Cost:~$k$.
\end{itemize}

Candidate~$r$ receives~$k^2$ points after all these swaps in $\Gamma$.
Easy computations show that the above swaps have cost~$4k^2$, so the total cost of $\Gamma$ is~$\beta=k^3+10k^2$.
Clearly, 

\vspace{2pt}
\begin{tabular}{l}
$\score(p,W^{\Gamma})=K$, \\
$\score(r,W^{\Gamma})=k^2 \leq K$,  \\
$\score(a,W^{\Gamma})=K$ for each~$a \in A$, and \\
$\score(q,W^{\Gamma})=\score(q,W) \leq K$ for all the remaining candidates~$q$. 
\end{tabular}
\vspace{2pt}

This means that~$p$ is a winner in~$W^{\Gamma}$ according to 2-approval.
Hence,~$\Gamma$ is indeed a solution for~$I_{\mathcal{G}}$, proving the first direction of the reduction.

{\bf Direction~$\Longleftarrow$.} 
Suppose that~$I_{\mathcal{G}}$ is solvable, and there is a set~$\Gamma$ of swaps transforming~$W$ into~$W^{\Gamma}$ 
with total cost at most~$\beta$ such that~$p$ wins in~$W^{\Gamma}$ according to 2-approval. We also assume w.l.o.g. 
that~$\Gamma$ is a solution having minimum cost. 

As argued above,~$\mathrm{score}(p,W^{\Gamma})\leq K$ 
and~$\mathrm{score}(g,W^{\Gamma}) \geq K$ for each~$g \in G$ follow directly from the construction. 
Thus, only~$\mathrm{score}(p,W^{\Gamma}) = \mathrm{score}(g,W^{\Gamma}) = K$  for each~$g \in G$ is possible. 
Hence, for any~$i,j \in [k]$, by~$\mathrm{score}(a^{i,j},W)=K+1$ we get that~$a^{i,j}$ 
must lose at least one point during the swaps in~$\Gamma$. 
As no dummy can have more points in~$W^{\Gamma}$ than in~$W$ (by their positions), 
and each candidate in~$\mathcal{C} \setminus (A \cup D \cup \{r\})$ has~$K$ points in~$W$, 
the~$k^2$ points lost by the candidates in~$A$ can only be transferred by~$\Gamma$
to the candidate~$r$. 

By the optimality of~$\Gamma$, this means that 
$a^{i,j}$ sends a point to~$b_x^i$ in~$\Gamma$ for some unique~$x \in V_j$; 
we define~$\sigma(i,j) = x$ in this case. 
First, we show~$\sigma(1,j)=\sigma(2,j) = \dots = \sigma(k,j)$ for each~$j \in [k]$, and 
then we prove that the vertices~$\sigma(1,1), \dots, \sigma(k,k)$ 
form a~$k$-clique in~$\mathcal{G}$.

Let~$B^*$ be the set of candidates in~$B$ that receive a point from 
some candidate in~$A$ according to~$\Gamma$;
$|B^*|=k^2$ follows from the minimality of~$\Gamma$.
Observing the votes in~$W_S \cup W_C$, we can see that some~$b^i_x \in B^*$ 
can only transfer one point to~$r$ by transferring it 
to~$h^{i'}_x$ via~$f^{i'}_x$ for some~$i'$ using swaps in the votes~$W_S$, 
and then transferring the point from~$h^{i'}_x$ to~$r$ using swaps in the votes~$W_C$.
Basically, there are three ways to transfer a point from~$b^i_x$ to~$h^{i'}_x$:
\begin{itemize}
\item[(A)] $b^i_x$ sends one point to~$f^{i-1}_x$ in~$w_S(i,x)$ at a cost of~$3+2 \varepsilon$, 
and then~$f^{i-1}_x$ transfers one point to~$h^{i-1}_x$. 
This can be carried out applying exactly~$3+2(k-i+1)+1=6+2(k-i)$ swaps, 
having total costs~$6+2(k-i)+2 \varepsilon$.
\item[(B)] $b^i_x$ sends one point to~$\wt{c}^i_x$ in~$w_S(i,x)$, 
$\wt{c}^i_x$ sends one point to~$c^i_x$, 
$c^i_x$ sends one point to~$f^i_x$, and then the point gets transferred to~$h^i_x$. 
Again, the number of used swaps is exactly~$5+2(k-i)+1=6+2(k-i)$, 
and the total cost is at least~$6+2(k-i)$.
\item[(C)] $b^i_x$ sends one point to~$\wt{c}^i_x$ in~$w_S(i,x)$, and then the point is transferred 
to a candidate~$f^{i'}_x$ for some~$i' > i$ via the candidates~$c^i_x, \wt{c}^{i+1}_x, c^{i+1}_x, \dots, c^{i'}_x$. 
Again, the number of used swaps is exactly~$5+2(k-i)+1=6+2(k-i)$, 
and the total cost is at least~$6+2(k-i)$.
\end{itemize}
Summing up these costs for each~$b^i_x \in B^*$, and taking into account the cost of sending
the~$k^2$ points from the candidates of~$A$ to~$B^*$, 
we get that the swaps of~$\Gamma$
applied in the votes~$W_S$ must have total cost at least 
$k^2 + k \left( \sum_{j=1}^k 6+2(k-i) \right) =k^3 +6k^2$. 
Equality can only hold if each~$b^i_x \in B^*$ transfers one 
point to~$h^{i'}_x$ for some~$i' \geq i$, i.e. either case~B or~C happens.

Let~$H^*$ be the set of those~$k^2$ candidates in~$H$ that receive a point transferred 
from a candidate in~$B^*$, and let us consider now the swaps of~$\Gamma$ applied in the votes~$W_C$
that transfer one point from a candidate~$h^i_x \in H^*$ to~$r$. 
Let~$j$ be the index such that~$x \in V_j$.
First, note that~$h^i_x$ must transfer one point to~$m^{i,j}$ (if~$i \leq j$) or to~$m^{j,i}$ (if~$i>j$).
Moreover, independently of whether~$i<j$,~$i=j$, or~$i>j$ holds, this can only be done using exactly~$3$ swaps, 
thanks to the role of the candidates in~$\wt{H}$ and in~$\wt{M}$. 
To see this, note that only the below possibilities are possible: 

\begin{itemize}
\item If $i<j$, then $h^i_x$ sends one point in $w_C(i,j,y,x)$ for some $y \in V_i$ 
either to~$\wt{m}^{i,j}$ via two swaps, or to~$m^{i,j}$ via three swaps. 
In the former case, $\wt{m}^{i,j}$ must further transfer the point to~$m^{i,j}$, 
which is the third swap needed.
\item If $i>j$, then $h^i_x$ first sends one point to $\wt{h}^i_x$, and then 
$\wt{h}^i_x$ sends this point either to~$\wt{m}^{j,i}$ via one swap, or to~$m^{j,i}$ via two swaps
applied in the vote~$w_C(j,i,x,y)$ for some $y \in V_i$. In the former case, 
$\wt{m}^{j,i}$ transfers the point to~$m^{j,i}$ via an additional swap.
Note that in any of these cases, $\Gamma$ applies 3 swaps (maybe having cost $3+\varepsilon$ or $3+2\varepsilon$).
\item If $i=j$, then $h^i_x$ sends one point to~$m^{i,i}$ through 3 swaps.
\end{itemize}

Thus, transferring a point from~$h^i_x$ to~$r$ needs 4 swaps in total, and hence the number of swaps 
applied by~$\Gamma$ in the votes~$W_C$ is at least~$4k^2$. 
Now, by~$\beta=k^3+10k^2$ we know that equality must hold everywhere in the previous reasonings. 
Therefore, as argued above, each~$b^i_x$ must transfer a point to~$h^{i'}_x$ for some~$i'\geq i$, 
i.e., only cases~B and~C might happen from the above listed possibilities. 
Now, we are going to argue that only case~B can occur.

Let us consider the multiset $I_B$ containing $k^2$ pairs of indices, obtained by 
putting~$(i,j)$ into $I_B$ for each $b^i_x \in B^*$ with $x \in V_j$.
It is easy to see that $I_B=\{(i,j) \mid 1 \leq i,j \leq k\}$.
Similarly, we also define the multiset $I_H$ containing $k^2$ pairs of indices, obtained by 
putting~$(i,j)$ into $I_H$ for each $h^i_x \in H^*$ with $x \in V_j$.
By the previous paragraph, $I_H$ can be obtained from $I_B$ by taking some pair $(i,j)$ from $I_B$
and replacing them with corresponding pairs $(i',j)$ where $i' > i$.
Let the \emph{measure} of a multiset of pairs $I$ be $\mu(I) = \sum_{(i,j) \in I} i+j$.
Then, $\mu(I_H) \geq \mu(I_B) = k^2(k+1)$.

By the above arguments, if for some $i<j$ the pair $(i,j)$ is contained with multiplicity~$m_1$ in $I_H$, 
and $(j,i)$ is contained with multiplicity~$m_2$ in $I_H$, 
then the candidate~$m^{i,j}$ has to send $m_1+m_2$ points to $r$. 
Similarly, if $(i,i)$ is contained in $I_H$ with multiplicity $m$, then $m^{i,i}$ has to send $m$ points to $r$. 
Thus, $\mu(I_H)$ equals the value obtained by summing up $i+j$ for each $m^{i,j}$ and for
each point transferred from $m^{i,j}$ to $r$.
However, each~$m^{i,j}$ (where~$i<j$) can only send two points to~$r$, 
and each~$m^{i,i}$ can only send one point to~$r$, implying $\mu(I_H) \leq 
\sum_{i \in [k]} (i+i) + 2\sum_{1 \leq i<j \leq k} (i+j) = k^2(k+1) = \mu(I_B)$.
Hence, the measures of $I_B$ and $I_H$ must be equal, from which
$I_H=I_B$ follows. Thus, only case~B can happen.

Therefore,~$\Gamma$ must send one point from~$b^i_x$ to~$\wt{c}^i_x$ 
at a cost of~2, and apply three more swaps of cost~3 
to transfer one point from~$\wt{c}^i_x$ to~$f^i_x$. 
But in the case~$i \geq 2$, this can only be done avoiding any swap of cost~$1+\varepsilon$ 
in the vote~$w_S(i,x)$, if~$f^{i-1}_x$  simultaneously receives one point from~$c^{i-1}_x$ in~$w_S(i,x)$ as well, 
which implies~$b^{i-1}_x \in B^*$. Applying this argument iteratively, 
this shows that~$b^i_x \in B^*$ implies~$\{b^h_x \mid h<i\} \subseteq B^*$. 
Hence,~$B^*$ is the union of~$k$ sets of the form~$\{ h^1_x, h^2_x, \dots, h^k_x \}$, 
implying $\sigma(1,j)=\sigma(2,j) = \dots = \sigma(k,j)$ for each~$j \in [k]$.

Finally, consider the swaps that transfer one point from~$h^i_x \in H^*$ 
to~$m^{i,j}$ in~$W_C$ where~$x \in V_j$ and~$i<j$.
We know that if~$x \in V_j$, then this must be done by applying some swaps 
in the vote~$w_C(i,j,y,x)$ for some~$y \in V_i$ such that~$xy \in E$. 
But because of our budget, each such swap must have cost~$1$ and not~$1 + \varepsilon$, which can only happen if
$\Gamma$ transforms~$w_C(i,j,y,x) = (h_x^i, \wt{h}_y^j, \wt{m}^{i,j}, m^{i,j}, \dagger)$ 
into~$(\wt{m}^{i,j}, m^{i,j}, h_x^i, \wt{h}_y^j, \dagger)$. But this implies that~$h^j_y$ must also be in~$B^*$,
implying $y=\sigma(j,i)$. 
Therefore we obtain that~$\sigma(i,j)$ and~$\sigma(j,i)$ must be vertices connected by an edge in~$\mathcal{G}$.
This proves the existence of a $k$-clique in~$\mathcal{G}$, proving the theorem.
\qed
\end{proof}

Looking into the proof of Theorem~\ref{thm_2approval}, we can see that the results hold even 
in the following restricted case: 
\begin{itemize}
\item the costs are uniform in the sense that swapping two given candidates has the same price in any vote, and
\item the maximum number of swaps allowed in a vote is four. 
\end{itemize}

By applying minor modifications to the given reduction, 
Theorem~\ref{thm_2approval} can be generalized to hold for the following modified versions as well.
\begin{itemize}
\item If we want $p$ to be the unique winner: we only have to set $\mathrm{score}(p,W)=K+1$.
\item If we use $k$-approval for any fixed $k$ with $k \geq 3$ instead of 2-approval:
it suffices to insert $k-2$ dummies into the first $k-2$ positions of each 
vote.\footnote{
Note that the number of candidates in the constructed instance will depend on the value of $k$, 
so in particular, the result does not hold for voting rules such as veto or $(m-2)$-approval.
}
\end{itemize}

We can summarize these generalizations of Theorem~\ref{thm_2approval} in the following 
theorem, which follows directly from the discussion above.

\begin{theorem}[Generalization of Theorem~\ref{thm_2approval}]
For any constant $k \geq 2$, \textsc{Swap Bribery} for $k$-approval is W[1]-hard 
when parameterized by the value of the budget, assuming that the minimum cost of a swap is $1$; 
this holds even if the following restrictions apply:
\begin{itemize}
\item there are only two different positive costs possible for a swap, 
i.e. each swap has a cost in $\{ 1, 1+\epsilon\}$ for some $\epsilon>0$,
\item the cost of swapping two given candidates is the same in each vote, and
\item the maximum number of swaps allowed in a vote is 4. 
\end{itemize}
\end{theorem}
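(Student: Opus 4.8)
The statement bundles three strengthenings of Theorem~\ref{thm_2approval}(2): that the cost function may be taken \emph{uniform} (independent of the vote), that at most four swaps per vote suffice, and that $k=2$ may be replaced by any fixed $k\geq 2$. My plan is to obtain all three by directly reusing the reduction from \textsc{Multicolored Clique} built in the proof of Theorem~\ref{thm_2approval}, checking that it already has (or can be trivially patched to have) the first two properties, and then padding the votes to raise the value of $k$. Since $\beta=k^3+10k^2$ is a function of $k$ alone, each modification keeps the map a parameterized reduction with parameter $\beta$, so W[1]-hardness is preserved throughout.

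\textbf{Four swaps per vote and uniformity.} Every score-changing operation in the construction takes place either inside one of the four-position truncated gadget votes $w_S(i,x)$ and $w_C(i,j,y,x)$, or inside a transfer vote $q_1\leadsto^c q_2$. The explicit solution built in direction $\Longrightarrow$ uses exactly four swaps in each gadget vote (it realises the block swap $(a,b,c,d,\dagger)\mapsto(c,d,a,b,\dagger)$) and one swap in each relevant transfer vote, so it already respects the bound of four swaps per vote; conversely, imposing this bound only shrinks the feasible set, while the completeness direction $\Longleftarrow$ argues solely by \emph{lower-bounding} swap counts and is therefore unaffected. For uniformity I would verify that each unordered pair of candidates is charged the same cost in every vote in which it is swappable. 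A routine scan of the gadgets shows that the only clash is the pair $\{\wt{m}^{i,j},m^{i,j}\}$, which costs $1+\varepsilon$ inside $w_C(i,j,y,x)$ but $1$ in the transfer $\wt{m}^{i,j}\leadsto^1 m^{i,j}$; this single clash is removed by a minor local redefinition of that transfer gadget (e.g.\ re-routing it through one fresh transporter), after which every pair carries a single cost and $\beta$ remains a function of $k$. Checking that such a redefinition preserves the tight swap-count bounds of direction $\Longleftarrow$ is exactly the bookkeeping that the ``minor modifications'' remark refers to.

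\textbf{From $2$-approval to $k$-approval.} For fixed $k\geq 3$ I prepend $k-2$ \emph{fresh} dummy candidates to the top $k-2$ positions of every vote, using distinct dummies in distinct votes. A candidate that occupied position $p$ before now sits at position $p+(k-2)$, so it holds a one-position under $k$-approval exactly when $p\le 2$, i.e.\ exactly when it held a one-position under $2$-approval; hence all initial scores of the original candidates are unchanged, while each new dummy appears once at a one-position and so has score at most $1<K$. The four action positions of every gadget (old positions $1,\dots,4$) are now the consecutive positions $k-1,\dots,k+2$, straddling the $k/(k{+}1)$ boundary exactly as they straddled the old $2/3$ boundary, and the truncation mark moves to position $k+3$. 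Consequently the explicit solution of direction $\Longrightarrow$ transfers points through precisely the same swaps at precisely the same total cost $\beta$, and $p$ again ties all opponents at score $K$.

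\textbf{Main obstacle.} The delicate part is the completeness direction after padding: I must show that the extra freedom to move the padding dummies cannot buy a cheaper bribery. The key observation is that a swap between two candidates both lying in one-positions — in particular any swap of a padding dummy with the action candidate immediately below it — changes no score, so in a minimum-cost solution it may be discarded; and pushing a dummy down into a zero-position merely wastes budget without enabling any point transfer among the relevant candidates, since the candidate that must \emph{lose} a point still does so only by crossing the $k/(k{+}1)$ boundary exactly as before. Thus an optimal solution may be assumed to leave all dummies fixed, at which point the score bookkeeping and the two cost lower bounds ($k^3+6k^2$ over $W_S$ and $4k^2$ over $W_C$) reduce verbatim to the $2$-approval analysis, forcing the selection maps and hence a $k$-clique in $\mathcal{G}$. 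Combining this with the inspection of the previous paragraphs yields W[1]-hardness with respect to $\beta$ under all three stated restrictions, for every fixed $k\geq 2$.
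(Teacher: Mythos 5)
Your proposal matches the paper's own justification essentially verbatim: the paper, too, derives this theorem purely by inspection of the reduction of Theorem~\ref{thm_2approval} — noting that the intended solution uses at most four swaps per vote while the completeness direction only lower-bounds swap counts, and that for $k \geq 3$ it suffices to insert $k-2$ dummies into the first $k-2$ positions of every vote, with $\beta = k^3+10k^2$ a function of $k$ alone so that the parameterized reduction is preserved. If anything you are more careful than the paper on two points it silently glosses over: the genuine uniformity clash on the pair $\{\wt{m}^{i,j}, m^{i,j}\}$ (cost $1$ in the transfer vote $\wt{m}^{i,j} \leadsto^1 m^{i,j}$ but $1+\varepsilon$ in the votes $w_C(i,j,y,x)$), which your transporter re-routing repairs at the price of redoing the tight budget accounting that you defer, and the immobility of the padding dummies — where one should note that your clause ``without enabling any point transfer among the relevant candidates'' is literally false (pushing a padding dummy below the $k/(k{+}1)$ boundary lets, e.g., $\wt{c}^i_x$ gain a point in $w_S(i,x)$ for three unit-cost swaps), so the correct justification is your other clause: the budget has zero slack against the mandatory cost of routing the $k^2$ points from $A$ to $r$, hence no dummy move can be afforded.
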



\section{Parameterizing with the number of candidates}
\label{sec:candidates}

In this section, we will consider 
the parameter `number of candidates'. For this case, the following definition is helpful. 

Let $S_m= \{ \pi_1, \pi_2, \dots, \pi_{m!} \}$ be the set of permutations of size $m$. 
We say that an election system is \emph{described by linear inequalities}, 
if for a given set $C= \{c_1, c_2, \dots, c_m\}$ of candidates 
it can be characterized by $f(m)$ sets $A_1, A_2, \dots A_{f(m)}$ (for some computable function~$f$) 
of linear inequalities over $m!$ variables
$x_1, x_2, \dots, x_{m!}$ in the following sense: 
if $n_i$ denotes the number of those votes in a given election~$E$ that order $C$ according to $\pi_i$, then 
the first candidate $c_1$ is a winner of the election if and only if 
for at least one index $i$, the setting~$x_j=n_j$ for each $j$ satisfies all inequalities in $A_i$.
Let us remark that Faliszewski et al. independently defined 
a very similar notion in the context of multimode control problems~\cite{FHH09ijcai}.

It is easy to see that many election systems can be described by linear inequalities: 
any system based on scoring rules, Copeland$^\alpha$ ($0 \leq \alpha \leq 1$), 
Maximin, Bucklin, Ranked pairs.
For example, $k$-approval is described by the following set $A_1$ of linear inequalities: 
\begin{eqnarray*}
A_1: \quad  \sum_{i: \rank(c_1,\pi_i) \leq k} x_i &\geq& \sum_{i: \rank(c_j,\pi_i) \leq k} x_i 	
	\quad \textrm{$\phantom{i}$for each $2 \leq j \leq m$,} 
\end{eqnarray*} 
where $\rank(c_j,\pi_i)$ denotes the position of candidate~$c_j$ in the linear order corresponding to the permutation~$\pi_i$.

To see an example where we need more than one set of linear inequalities, consider the Bucklin rule.
The \emph{Bucklin winning round} in an election is the smallest number $b$ such that there exists a
candidate that is ranked in the first $b$ positions in at least $\left \lfloor \frac{n}{2} \right \rfloor +1$ votes 
(where $n$ is the number of votes in the election). 
According to Bucklin, the winners of an election with Bucklin winning round $b$ are those candidates 
that have maximal $b$-approval score, i.e. that are ranked in the first $b$ positions by the maximum
number of votes. Note that the $b$-approval score of each winner must be
at least $\left \lfloor \frac{n}{2} \right \rfloor +1$. 
This voting system can be described by the following sets of linear inequalities $A_1, A_2, \dots, A_m$ 
where $A_b$ corresponds to the case where the Bucklin winning round is exactly $b$.

\begin{eqnarray*}
A_b: \quad \sum_{i: \rank(c_j,\pi_i) \leq b-1} x_i &\leq& \left \lfloor \frac{n}{2} \right \rfloor  
	\qquad \qquad \qquad \textrm{for each $1 \leq j \leq m$,} \\
\sum_{i: \rank(c_1,\pi_i) \leq b} x_i &\geq& \left \lfloor \frac{n}{2} \right \rfloor +1,  \\
\sum_{i: \rank(c_1,\pi_i) \leq b} x_i &\geq& \sum_{i: \rank(c_j,\pi_i) \leq b} x_i  
	\quad \textrm{$\phantom{i}$for each $2 \leq j \leq m$.} \\
\end{eqnarray*}

In the above description, the linear inequalities in the first line mean that the Bucklin winning round is at least $b$. 
The second line implies that $c_1$ has $b$-approval score at least $\lfloor \frac{n}{2} \rfloor +1$, and the 
third set of inequalities requires that no candidate has greater $b$-approval score than $c_1$. 
Clearly, $c_1$ is a winner according to Bucklin if and only if each linear inequality of $A_b$ is satisfied 
by setting~$x_j=n_j$ ($1 \leq j \leq m$) for at least one set $A_b$ among the sets $A_1, A_2, \dots, A_m$.

\begin{theorem}\label{thm_candidates}
\textsc{Swap Bribery} is FPT if the parameter is the number of candidates, for any 
election system described by linear inequalities.
\end{theorem}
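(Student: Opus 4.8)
The plan is to exploit that, with the number $m$ of candidates as the parameter, there are only $M := m!$ possible linear orders $\pi_1,\dots,\pi_M$ of $C$, a quantity bounded by the parameter. Any solution merely reassigns each vote to one of these $M$ target orders, and --- since $\mathcal{E}$ is described by linear inequalities --- whether the preferred candidate (which we relabel as $c_1$) wins depends only on the final profile $(x_1,\dots,x_M)$, where $x_j$ counts the votes ending in order $\pi_j$. So I would first reduce the question to: is there an assignment of the $n$ votes to target orders, of total swap cost at most $\beta$, whose resulting profile $x$ satisfies $A_\ell$ for at least one of the $f(m)$ inequality systems $A_1,\dots,A_{f(m)}$?

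The second ingredient is a per-vote, per-target cost. For each vote $v$ and each order $\pi_j$ I would precompute the minimum cost $d_v(j)$ of turning $v$ into $\pi_j$ by admissible swaps; this is polynomial, since an optimal swap sequence swaps exactly the pairs of candidates whose relative order differs between $v$ and $\pi_j$, so $d_v(j)$ is just the sum of those pairwise swap costs. Now I fix one of the $f(m)$ patterns $A_\ell$ and set up a mixed integer linear program: integer variables $x_1,\dots,x_M$ constrained by $A_\ell$ and $\sum_j x_j=n$, and continuous variables $f_{v,j}\ge 0$ with $\sum_j f_{v,j}=1$ for each vote $v$ and $\sum_v f_{v,j}=x_j$ for each order $j$, together with the budget constraint $\sum_{v,j} d_v(j)\,f_{v,j}\le\beta$. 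Taking the disjunction over the $f(m)$ choices of $\ell$ then decides the whole instance.

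I expect the main obstacle to be keeping the number of \emph{integer} variables bounded by a function of $m$ while still encoding the heterogeneous, per-vote swap costs: a naive formulation would need one integer variable per (vote, target) pair, i.e.\ $n\cdot M$ of them. The key to circumventing this is to let the assignment variables $f_{v,j}$ be \emph{continuous}. For any fixed integral profile $x$ (with $\sum_j x_j=n$) the feasible $f$ form a transportation polytope with integral margins, whose constraint matrix is totally unimodular; hence its minimum-cost vertex is integral, and a fractional $f$ of cost at most $\beta$ exists if and only if an integral assignment of cost at most $\beta$ does. Thus the program has only $M=m!$ integer variables, the continuous variables carrying the arbitrarily many voters for free. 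Invoking Lenstra's algorithm in its mixed-integer form --- whose running time is polynomial once the number of integer variables is fixed --- each of the $f(m)$ programs is solvable in $g(m)\cdot|I|^{O(1)}$ time, which yields the claimed fixed-parameter tractability. I would close by checking the two correctness directions: a genuine bribery yields a feasible pair $(x,f)$, and conversely the total-unimodularity argument turns any feasible fractional $f$ into an explicit set $\Gamma$ of admissible swaps of cost at most $\beta$ realizing a winning profile $x$ satisfying $A_\ell$.
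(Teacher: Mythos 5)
Your proposal is correct, and it shares the paper's skeleton --- reduce winner determination to the final profile $(x_1,\dots,x_{m!})$ over the $m!$ linear orders, test each of the $f(m)$ inequality systems separately, precompute optimal vote-transformation costs via the inverted-pairs characterization from Elkind et al., and finish with Lenstra's algorithm --- but your encoding is genuinely different. The paper aggregates votes by their preference order: it introduces $(m!-1)\,m!$ \emph{integer} variables $t_{i,j}$ counting how many votes with order $\pi_i$ are transformed into $\pi_j$, at a single per-pair price $c_{i,j}$, and solves a pure ILP. You instead keep one continuous assignment variable $f_{v,j}$ per (vote, target order) pair with per-vote prices $d_v(j)$, keep only the $m!$ profile counts integral, and invoke the mixed-integer form of Lenstra's theorem together with total unimodularity of the transportation polytope to convert any fractional assignment with integral margins into an actual bribery of no greater cost. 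This costs you an extra ingredient (the vertex-integrality argument, and the mixed rather than pure version of Lenstra), but it buys robustness: the paper's aggregated price $c_{i,j}$ is only well defined when all voters sharing the order $\pi_i$ also share the same swap costs, whereas the problem definition allows the cost function $c:C\times C\times V\rightarrow\mathbb{N}$ to differ arbitrarily per voter; your per-vote variables handle this heterogeneity directly, so your formulation is, if anything, more faithful to the fully general statement than the paper's aggregated one. Both approaches yield running time $g(m)\cdot|I|^{O(1)}$, hence the same FPT conclusion.
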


\begin{proof}
Let  $C= \{c_1, c_2, \dots, c_m\}$ be the set of candidates, where~$c_1$ is the preferred one, 
and let $A_1, A_2, \dots A_{f(m)}$ be the sets of linear inequalities over variables $x_1, \dots, x_{m!}$
describing the given election system $\mathcal{E}$.
For some $\pi_i \in S_{m}$, let $v_i$ denote the vote that ranks $C$ according to $\pi_i$.
We describe the set $V$ of votes by writing $n_i$ for the multiplicity of the vote $v_i$ in $V$.

Our algorithm solves $f(m)$ integer linear programs with variables 
$T= \{t_{i,j} \mid i \neq j,$ $1 \leq i,j \leq m!\}$.
We will use $t_{i,j}$ to denote the number of votes $v_i$ that we transform into votes $v_j$;
we will require $t_{i,j} \geq 0$ for each $i \neq j$.
Let $V^T$ denote the set of votes obtained by transforming the votes in $V$ according
to the variables $t_{i,j}$ for each $i \neq j$.
Such a transformation from $V$ is feasible if 
$\sum_{j \neq i} t_{i,j} \leq n_i \textrm{ holds for each } i \in [m!]$ (inequality $\mathcal{A}$).
By 
\cite{EFS09}, we can compute the price $c_{i,j}$ 
of transforming the vote $v_i$ into $v_j$ in $O(m^3)$ time. 
Transforming $V$ into $V^T$
can be done with total cost at most $\beta$, if 
$\sum_{i,j \in [m!]} t_{i,j} c_{i,j} \leq \beta$
(inequality $\mathcal{B}$).

We can express the multiplicity $x'_i$ of the vote $v_i$ 
in $V^T$ as $x'_i= n_i +  \sum_{j \neq i} t_{j,i} - \sum_{i \neq j} t_{i,j}$.
For some $i \in [f(m)]$, let $A'_i$ denote the set of linear inequalities over the variables in $T$
that are obtained from the linear inequalities in $A_i$ by substituting $x_i$ with the above given 
expression for $x'_i$.
Using the description of $\mathcal{E}$ with the given linear inequalities,
we know that the preferred candidate $c_1$ wins in the $\mathcal{E}$-election~$(C, V^T)$ for some values of the variables $t_{i,j}$
if and only if these values satisfy the inequalities of $A'_i$ for at least one $i \in [f(m)]$.
Thus, our algorithm solves \textsc{Swap Bribery} by finding a non-negative assignment for the variables in~$T$
that satisfies both the inequalities $\mathcal{A}$, $\mathcal{B}$,
and all inequalities in $A'_i$ for some $i$.

Solving such a system of linear inequalities can be done in linear FPT time, 
if the parameter is the number of variables~\cite{lenstra-ip}. By $|T|=(m!-1)m!$ the theorem follows.  
%
\qed
\end{proof}

Similarly, we can also show fixed-parameter tractability for other problems if the parameter is the number of candidates, 
e.g. for {\sc Possible Winner} (this was already obtained by Betzler et al. for several voting systems, \cite{BHN09}), 
{\sc Manipulation} (both for weighted and unweighted voters), several variants of {\sc Control} 
(this result was obtained for Llull and Copeland voting by Faliszewski et al., \cite{FHHR09}), 
or {\sc Lobbying}~\cite{CFRS07} (here, the parameter would be the number of issues in the election). 
Since our topic is {\sc Swap Bribery}, we omit the details.

\section{Parameterizing with the number of votes}
\label{sec:votes}

In this section, we consider the case where the number of votes is a parameter. 
First, note that if $k$ is unbounded and is part of the input, then {\sc Swap Bribery} 
is NP-complete even for a single vote~\cite[Theorem 4.5]{EFS09}.
Hence, we consider parameterizations of {\sc Swap Bribery} where not only the number~$n$ of votes, 
but also either $k$ or the budget~$\beta$ is regarded as a parameter. 


We first recall that there is a simple brute force algorithm given in~\cite{EFS09}
for \textsc{Swap Bribery} that runs in $m^{O(kn)}$ time.
Looking at this running time, one can wonder whether it is possible 
to get $k$ or $n$ out of the exponent of $m$.
Theorem~\ref{thm_color}, which makes use of the technique of color-coding~\cite{alon-yuster-zwick-colorcoding},
answers this question in the affirmative for the case of $n$, 
by providing an algorithm for \textsc{Swap Bribery} for $k$-approval which is
fixed-parameter tractable with parameter $n$, supposing that $k$ is some fixed constant. 
Note that this result is best possible in the sense that the problem 
without parameterization remains NP-hard even if $k=2$.

By contrast, we will see in Theorem~\ref{thm_k_hard} that we cannot expect a similar result 
for the case where $n$ is constant but $k$ is a parameter. 

%
%

\begin{theorem}
\label{thm_color}
\textsc{Swap Bribery} for $k$-approval can be solved with a randomized algorithm in $2^{2(\log n + \log k)nk}O(m^{k+1})$ 
expected time. The derandomized version of the algorithm has running time $2^{O((\log n + \log k)nk)}O(m^{k+1} \log m)$.
\end{theorem}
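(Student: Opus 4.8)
The plan is to exploit two structural facts about $k$-approval and then use color-coding to control the interaction between the votes. First, since the $k$-approval score of a candidate equals the number of votes in which it holds a one-position, a solution is determined, as far as the winner is concerned, by the collection $(S_1,\dots,S_n)$ of final top-$k$ sets, where $S_i\subseteq C$ is the $k$-element set placed on the one-positions of vote $v_i$; the internal order within the top and within the bottom is irrelevant. Second, the candidates occupying a one-position in at least one vote of the final configuration — call this set $R$ — satisfy $|R|\le nk$, and only their placement matters. I would begin by establishing the single-vote subroutine: for a fixed vote $v$ and a prescribed target top-$k$ set $S$, the minimum cost of an admissible swap set making the top-$k$ set of $v$ equal to $S$ is computable in polynomial time. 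This is the single-vote analogue of the min-cost-flow computation in the proof of Theorem~\ref{uniform}, obtained by optimally matching the candidates that must enter the top with those that must leave it across the boundary between positions $k$ and $k+1$. Enumerating the $O(m^k)$ candidate sets per vote and computing each such cost accounts for the $O(m^{k+1})$ factor.

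The difficulty that color-coding overcomes is that the winning condition couples the votes: requiring $\score(p,\cdot)\ge\score(c,\cdot)$ for every $c$ means tallying, across all votes, how often each candidate is placed on top, so a naive dynamic program would have to remember the identities of up to $nk$ shared candidates. To decouple this, I would color the candidates independently and uniformly at random with a palette of about $nk$ colors and restrict attention to \emph{colorful} solutions, namely those in which the members of $R$ all receive distinct colors. Whenever the intended $R$ is colorful — which occurs with probability at least $(nk)!/(nk)^{nk}$ — a color serves as a surrogate identity: a single color appearing on top in several votes stands for one candidate scoring in all of them, so scores can be tallied per color without ambiguity.

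Given a fixed coloring, I would solve the colorful instance by guessing how the colors populate the $nk$ one-position slots of the final configuration — equivalently, which $k$-set of colors forms the top of each vote — and, for each such color-configuration, reading the cheapest realization off the precomputed per-vote cost table while checking that the color of $p$ attains the maximum score and that the total cost stays within $\beta$. The number of color-configurations examined is $(nk)^{O(nk)}=2^{O((\log n+\log k)nk)}$, and combining this search with the number of random colorings required to hit a colorful one yields the expected bound $2^{2(\log n+\log k)nk}\,O(m^{k+1})$. For the deterministic version I would replace the random colorings by an $(m,nk)$-perfect hash family; its size contributes the extra $\log m$ factor and gives the stated worst-case running time.

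The main obstacle I anticipate is twofold. The technical heart is the single-vote subproblem: proving that the minimum cost to install a prescribed top-$k$ set is polynomial-time computable, and that the cheapest global solution decomposes into independent per-vote realizations once the family $(S_1,\dots,S_n)$ is fixed, so that the per-vote cost table suffices. The conceptual heart is arranging the color-coding so that candidate identity is faithfully preserved across votes while keeping the configuration search at $2^{O((\log n+\log k)nk)}$; in particular, one must show that a genuine solution induces a colorful configuration under a favorable coloring, and conversely that every colorful configuration the algorithm accepts corresponds to an actual swap set of the same cost realizing a win for $p$.
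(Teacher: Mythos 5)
Your overall architecture coincides with the paper's: enumerate the $(nk)^{O(nk)}$ color configurations (the paper calls them \emph{election patterns} --- $n$-tuples of $k$-subsets of $[nk]$), precompute for each vote and each candidate $k$-set the cheapest bribery installing that set as the top-$k$, and use random coloring so that a genuine solution's at most $nk$ relevant candidates become distinguishable; the running-time accounting and the derandomization via perfect hash families also match. But there is a genuine gap at exactly the step you flag and leave unresolved, and your intermediate claim about it is false as stated: you write that for a colorful solution ``a single color appearing on top in several votes stands for one candidate scoring in all of them, so scores can be tallied per color without ambiguity.'' That is true of the hypothetical solution $\Gamma$ you are chasing, but not of the bribery your algorithm actually outputs: when you pick, independently in each vote, the cheapest candidate set realizing the prescribed color set, different votes may realize the same color by \emph{different} candidates. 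The assembled bribery $B$ is then not compatible with the configuration, the per-color tally does not equal any candidate's score, and your acceptance test (``the color of $p$ attains the maximum score'') does not by itself certify that $p$ wins. A related unaddressed detail: unless $p$'s color is reserved for $p$ alone, a vote whose color pattern contains $p$'s color can be realized by an impostor of that color, so even $\score(p,V^B)$ can fall short of the tally for $p$'s color.

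The paper closes this gap with a simple but essential observation, and you need it or an equivalent. Give $p$ a private color (the paper fixes $\alpha(p)=1$ and colors the remaining candidates only with the other colors), so that $\score(p,V^B)$ equals exactly the number of votes whose pattern contains color $1$; and for every other candidate $c$, bound $\score(c,V^B)$ from above by the number of occurrences of $\alpha(c)$ in the configuration --- this holds regardless of color collisions, since $c$ can only score in a vote whose pattern contains $\alpha(c)$. Restricting the enumeration to \emph{successful} configurations (color $1$ occurs at least as often as any other color) then makes $p$ a winner under the assembled $B$ even when $B$ is incompatible with the configuration, and per-vote minimality compared against the restriction of $\Gamma$ keeps the cost within $\beta$. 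Once this lemma is in place your plan goes through. One simplification worth noting: your single-vote subroutine needs no matching or flow, and the min-cost-flow analogy with Theorem~\ref{uniform} is misleading for arbitrary costs. For a prescribed top-$k$ set $S$ the inversions are forced --- every $c'\in S$ must be swapped with every $d\notin S$ preceding it in the vote, and nothing else --- so the minimum cost is just the sum of these forced swap costs, computable in $O(mk)$ time per set, which is where the $O(m^{k+1})$ factor comes from.
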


\begin{proof}
We are going to apply the idea of color-coding~\cite{alon-yuster-zwick-colorcoding} 
widely used to design parameterized algorithms.
Let $I$ be the given instance of \textsc{Swap Bribery} with 
$V=\{v_1, \dots, v_n\}$ and $C=\{p, c_1, \dots, c_{m-1}\}$ denoting the set of votes and the set of candidates, 
respectively, where $p$ is our preferred candidate. 

To begin, let us introduce some definitions that capture the structure of a solution.
Let us call any $k$-size subset of the set $[nk]= \{1, \dots, nk \}$ a \emph{vote pattern}, 
and let us call an $n$-tuple of vote patterns an \emph{election pattern}.
We say that an election pattern $\mathcal{P}=(P_1, \dots, P_n)$ is \emph{successful}, 
if the element 1 appears at least as many times in $\mathcal{P}$ as any other element, i.e. if 
$|\{ i \mid 1 \in P_i \}| \geq |\{ i \mid c \in P_i \}|$ for any $c \in [nk]$.
Intuitively, we can think of an election pattern as the encoding of 
the family of those candidate sets that are moved into the first $k$ positions of some vote by a solution; 
hence, we use the $nk$ integers in correspondence to the relevant candidates obtaining at 
least one point in the bribed election. To explain the exact connection between 
election patterns and briberies, we need some additional concepts.

For a set $\Gamma$ of swaps for $I$, let $C^{rel}(\Gamma)$ denote the set of candidates 
whose score in $V^{\Gamma}$ is at least $1$. Clearly, $|C^{rel}(\Gamma)| \leq kn$ always holds, and 
if $\Gamma$ is a solution for $I$, then we also have $p \in C^{rel}(\Gamma)$.
We say that $\Gamma$ is \emph{compatible} with an election pattern $(P_1, \dots, P_n)$, 
if there is an injective function $\sigma$ mapping the elements of $C^{rel}(\Gamma)$ to 
different integers in $[nk]$ with $\sigma(p)=1$ 
such that for each vote $v_i \in V$, the set of integers assigned by $\sigma$ to the first $k$ candidates 
in the vote $v_i^{\Gamma}$ is exactly the set $P_i$.
We say that the mapping $\sigma$ is a \emph{witness} for this compatibility.

The importance of these definitions relies on the following two observations. 
On the one hand, if $\Gamma$ is a solution, then any election pattern compatible with $\Gamma$ is successful.
On the other hand, if a bribery is compatible with a successful election pattern 
and its cost does not exceed the given budget, then it yields a solution for $I$.
Therefore, our algorithm does the following: it enumerates every possible successful election pattern, and
for each such pattern it looks for the cheapest bribery compatible with it.
Note that there are at most $\binom{nk}{k}^n < (nk)^{nk}$ possible election patterns to check.

Given a successful election pattern $\mathcal{P}=(P_1, \dots, P_n)$, let $A= \bigcup_{i \in [n]} P_i \setminus \{1\}$.
We describe an algorithm $\mathcal{A}$ that, 
assuming that there exists a solution compatible with $\mathcal{P}$,
finds a solution in $(nk)^{nk}O(m^{k+1})$ randomized time.
So let us suppose from now on that $I$ admits a solution $\Gamma$ 
that is compatible with $\mathcal{P}$, and let $\sigma$ be a witness for this.
(Note that we do not know $\sigma$.)
Our algorithm applies color-coding as follows: it colors each candidate in $C \setminus \{p\}$ 
with the colors of $A$ randomly using  a uniform and independent distribution. 
Let $\alpha(c)$ denote the color of a candidate $c$; we set $\alpha(p)=1$.
We say that the coloring $\alpha$ is \emph{nice} if $\alpha(c)=\sigma(c)$ for each $c \in C^{rel}(\Gamma)$.
Clearly, a random coloring $\alpha$ is nice with probability at least $|A|^{-|A|} \geq (nk-1)^{-(nk-1)}$.

Assuming that we have a nice coloring $\alpha$, we can find a solution for $I$ as follows. 
For each $v_i \in V$, we take every possible subset $C' \subseteq C$ 
of size $k$ whose colors correspond to the vote pattern $P_i$, i.e. such that $\bigcup_{c \in C'} \alpha(c)=P_i$ holds. 
For each such $C'$, we compute the minimum cost of a bribery that moves 
the candidates of $C'$ to the first $k$ positions in $v_i$. 
This can be done in $O(mk)$ time for some $C'$, by simply swapping each candidate 
$c' \in C'$ with exactly those candidates in $C \setminus C'$ that precede $c'$ in the vote $v_i$. 
Now, for each $v_i \in V$ we take the cheapest one among all these briberies 
over all possible sets $C'$ colored by the colors of $P_i$; 
let $B_i$ be the resulting bribery for $v_i$. 
We claim that the union of the swaps in $B_1, \dots, B_n$ is a bribery $B$ that yields a solution.
Observe that $B$ can be computed in $n \binom{m}{k}O(mk)$ time. 

To prove our claim, first note that by our assumptions that $\Gamma$ is a solution compatible with $\mathcal{P}$ 
and $\alpha$ is nice, we get that the bribery $B$ cannot have cost greater than the cost of~$\Gamma$, 
as the algorithm must have considered the restriction of $\Gamma$ on $v_i$ 
when choosing $B_i$ for some $i$. Thus, $B$ does not exceed the budget. 
It remains to show that $p$ is a winner in $V^B$. 
First, observe that if $B$ is compatible with $\mathcal{P}$, then this follows from the fact that $\mathcal{P}$ is 
a successful election pattern. Unfortunately, it might happen that $B$ is not compatible with $\mathcal{P}$;
the reason for this is that different candidates in $C^{rel}(B)$ might have the same color. 
However, this will not cause any problems, since the score of any candidate $c \in C^{rel}(B)$ in $V^B$
is upper bounded by the number of occurrences of the element $\alpha(c)$ in the election pattern $\mathcal{P}$.
Since $\mathcal{P}$ is successful, this latter cannot be greater than the score of $p$ in $V^B$.
In other words, $p$ is a winner in $V^{B}$ because for each $c \in C \setminus \{p\}$ we have
$$\score(c,V^B) \leq |\{ i: \alpha(c) \in P_i\}| \leq  |\{ i: 1 \in P_i\}| = \score(p,V^B)
.$$

With this method, if the coloring is nice, then algorithm $\mathcal{A}$ finds a solution. 
By the above arguments, the randomized version of algorithm $\mathcal{A}$ 
finds a solution in $(nk)^{nk}O(m^{k+1})$ expected time, provided that there 
exists a solution compatible with $\mathcal{P}$. 
Therefore, checking every possible successful election pattern 
takes $(nk)^{2nk} O(m^{k+1})=2^{2(\log n + \log k)nk}O(m^{k+1})$ randomized time.

To derandomize the algorithm, we can apply standard techniques using $nk$-perfect
hash functions~\cite{alon-yuster-zwick-colorcoding}
instead of randomly coloring the candidates of $C \setminus \{p\}$.	
This yields a deterministic algorithm with $2^{O((\log n + \log k)nk)}O(m^{k+1} \log m)$ running time.
\qed
\end{proof}

Next, we complement Theorem~\ref{thm_color} by proving that there is no hope for getting $k$ out
of the exponent of $m$ in any algorithm solving \textsc{Swap Bribery} for $k$-approval, 
as this problem remains W[1]-hard with parameter $k$ even in the case $n=1$, 
i.e. if there is only one vote in the instance.

\begin{theorem}
\label{thm_k_hard}
\textsc{Swap Bribery} for $k$-approval with only one vote is W[1]-hard when parameterized by $k$.
\end{theorem}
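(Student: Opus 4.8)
\emph{The plan is to reduce from} \textsc{Clique}. The starting point is that in a single vote~$v$ a candidate wins under $k$-approval exactly when it sits in one of the first $k$ positions; hence $p$ wins if and only if at most $k-1$ candidates stand above $p$ in the reordered vote. Moreover, the minimum cost of transforming $v$ into a prescribed target order by admissible swaps equals the sum of the swap costs over all unordered pairs whose relative order differs between $v$ and the target — this is realized by a bubble sort that swaps each such pair exactly once, and is the same fact used in~\cite{EFS09} to compute transformation prices. So the whole instance collapses to a clean selection problem: decide which (at most $k-1$) candidates remain above $p$, and pay, for each candidate pushed below $p$, its swap cost with $p$, plus the cost of every non-$p$ pair forced out of its original relative order.

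Given a \textsc{Clique} instance $(G=(V_G,E_G),\ell)$, I would introduce one candidate $x_u$ per vertex $u$, one candidate $y_e$ per edge $e$, the preferred candidate $p$, and set $k=\ell+\binom{\ell}{2}+1$, which depends only on $\ell$ — exactly what a parameterized reduction requires. The initial vote lists all vertex candidates, then all edge candidates, then $p$ at the bottom (padding with a constant number of zero-cost dummy candidates if $|V_G|+|E_G|<k$, so that $p$ genuinely starts outside the top $k$). For the costs: swapping $p$ with a vertex candidate costs $1$ and with an edge candidate costs $2$; swapping $x_u$ with $y_e$ costs $2$ if $u$ is an endpoint of $e$ and $0$ otherwise; every vertex--vertex and edge--edge swap costs $0$. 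The budget is $\beta=|V_G|+2|E_G|-\ell-2\binom{\ell}{2}$.

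Because the vertices sit above the edges above $p$, one can keep the surviving candidates (vertices above edges) and the discarded ones in their original relative order, so the \emph{only} unavoidable positive-cost non-$p$ inversions are between a kept edge and a discarded incident vertex. Consequently, if a solution keeps $a$ vertices and $b$ edges above $p$ (so $a+b\le k-1$), its total cost is exactly $(|V_G|+2|E_G|)-f$, where
\[
 f \;=\; a+2b-2\sum_{e\ \mathrm{kept}}\bigl(\text{endpoints of }e\text{ not kept}\bigr).
\]
Thus a solution respecting the budget exists if and only if $\max f \ge \ell+2\binom{\ell}{2}$.

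The heart of the argument — and the step I expect to be the main obstacle — is proving $\max f=\ell+2\binom{\ell}{2}$, attained precisely when $G$ has an $\ell$-clique. Since a kept edge with a discarded endpoint contributes at most $0$ to $f$, an optimal selection keeps only edges whose both endpoints are kept, so effectively $b\le\binom{a}{2}$ and $a+b\le s:=\ell+\binom{\ell}{2}$, and we maximize $a+2b$. For $a\ge\ell$ the slot bound binds, giving $f=2s-a\le\ell+2\binom{\ell}{2}$ with equality only at $a=\ell$, which forces $\binom{\ell}{2}$ \emph{genuine} edges among $\ell$ vertices, i.e.\ a clique; for $a<\ell$ the bound $b\le\binom{a}{2}$ gives $f\le a^2\le(\ell-1)^2<\ell+2\binom{\ell}{2}$. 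The delicate point is exactly this balancing of the weights ($2$ versus $1$ for the $p$-swaps, and $2$ for incidences) so that over-keeping edges, or keeping edges with a missing endpoint, can never beat the clique value; conversely any $\ell$-clique yields a selection attaining the bound and hence a solution of cost exactly $\beta$. As $k$ is a function of $\ell$ alone and the construction is polynomial, this is a parameterized reduction, establishing W[1]-hardness.
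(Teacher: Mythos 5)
Your reduction is correct, but it is a genuinely different construction from the paper's. The paper also reduces from \textsc{Clique} with a single vote and $p$ at the bottom, but it uses \emph{no edge candidates}: the vote is $d_1 \succ \dots \succ d_{k+1} \succ c_1 \succ \dots \succ c_N \succ p$ under $(k+1)$-approval, a prohibitive cost $N^2$ on swapping $p$ with any vertex candidate forces exactly $N-k$ such swaps (so exactly $k$ vertex candidates stay above $p$), and adjacency is read off from \emph{savings} rather than payments: $c(c_i,c_j)=1$ exactly for edges, the degree-compensating cost $c(d_1,c_i)=N-|\{j<i \mid v_jv_i \in E\}|$ normalizes the price of lifting a selected candidate past all its predecessors to exactly $N$, and since the $k$ selected candidates are never swapped with one another, each edge inside the selection saves $1$; the tight budget $(N-k)N^2+kN-\binom{k}{2}$ then forces all $\binom{k}{2}$ pairs to be edges. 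You instead use the standard ``select vertices and edges, check incidence'' pattern: vertex and edge candidates, $k-1=\ell+\binom{\ell}{2}$ slots above $p$, the Kendall-tau fact that the optimal bribery cost equals the total cost of inverted pairs (the same fact from \cite{EFS09} that the paper invokes in Theorem~\ref{thm_candidates}), and the maximization of $f=a+2b-2\sum_{e \text{ kept}}(\text{missing endpoints})$, which you carry out correctly: $f\leq 2s-a\leq \ell+2\binom{\ell}{2}=\ell^2$ for $a\geq\ell$ with equality exactly at cliques, and $f\leq a^2\leq(\ell-1)^2<\ell^2$ for $a<\ell$. The trade-offs: the paper's parameter stays linear in the clique size and it needs only $N+k+2$ candidates, at the price of cost values polynomial in $N$; yours has a quadratic parameter and more candidates but establishes hardness already for the constant cost alphabet $\{0,1,2\}$. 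Two small points you should make explicit: assume w.l.o.g.\ that $|V_G|\geq\ell$ and $|E_G|\geq\binom{\ell}{2}$ (otherwise output a trivial no-instance), since this guarantees $\beta\geq 0$ --- without it, on degenerate instances your formula gives $\beta<0$ while the all-zero-cost dummies would let $p$ win for free; and in the case $a\geq\ell$ the slot bound should be stated as the inequality $f\leq a+2(s-a)=2s-a$ rather than an equality, though the equality analysis you then perform on it is exactly right.
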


\begin{proof}
We provide a parameterized reduction from the W[1]-hard \textsc{Clique} problem, 
parameterized by the size of the desired clique.
Let $G=(V,E)$ be the input graph given with $V=\{v_1, \dots, v_N\}$, and let $k$ be the parameter given. 
We are going to construct an instance~$I$ of \textsc{Swap Bribery} for $(k+1)$-approval 
consisting of an election with a single vote 
$$w: d_1 \succ \dots \succ d_{k+1} \succ c_1 \succ \dots \succ c_N \succ p,$$
a cost function $c$, and a budget $\beta=(N-k)N^2+kN-\binom{k}{2}$. 
We let $C=\{ d_1, \dots, d_{k+1}, c_1, \dots, c_N, p\}$ denote the set of candidates, 
and we let $C_v=\{ c_1, \dots, c_N \}$. 
Our preferred candidate is $p$.

The values of the cost function, shown also in Table~\ref{table_cost}, are as follows. 
(For simplicity, the cost of swapping two candidates $x_1$ and $x_2$ in the vote $w$ is denoted 
by $c(x_1,x_2)$ instead of $c(x_1,x_2,w)$, as there is only one voter.)
We define the cost of swapping $c_i$ with $c_j$ for some $i,j \in [N]$ to be $1$ if $v_i v_j$ is an edge in $G$, 
and $0$ otherwise. We set the cost of swapping $d_1$ with $c_i \in C_v$ to be $N-|\{j<i \mid v_j v_i \in E\}|$. 
Furthermore, we let the cost of swapping $p$ with any candidate $c_i \in C_v$ to be $N^2$.
All remaining swaps have zero cost. The construction takes time polynomial in $|V|+k$.

\begin{table}[t]
\begin{center}
\begin{tabular}{l@{\hspace{10pt}}l}
\noalign{\hrule}
\\[-8pt]
$c(c_i,c_j) = 1$ 			& for each $i,j \in [N]$ where $v_i v_j \in E$ 	\\
$c(c_i,c_j) = 0$ 			& for each $i,j \in [N]$ where $v_i v_j \notin E$ 	\\
$c(d_1,c_i) = N-|\{j<i \mid v_j v_i \in E\}|$ 			& for each $i \in [N]$ \\
$c(c_i,p) = N^2$ 			& for each $i \in [N]$ \\
$c(d_1,q)=0$ 				& for any $q \in C \setminus C_v$ \\
$c(d_j,q)=0$ 				& for any $2 \leq j \leq k+1$ and $q \in C$ 
\\[1mm]
\noalign{\hrule}
\end{tabular}
\end{center}
\caption{The values of the cost function $c$ in the proof of Theorem~\ref{thm_k_hard}.}
\label{table_cost}
\end{table}

We claim that the constructed instance $I$ is equivalent with the input of \textsc{Clique} in the sense
that $I$ has a solution if and only if $G$ has a clique of size $k$.

First, note that $\beta<(N-k+1)N^2$, which implies that any solution $\Gamma$ can swap $p$ with at most 
$N-k$ candidates from $C_v$. Therefore, $p$ can only obtain a point in $w^{\Gamma}$ 
if $\rank(p,w^{\Gamma})=k+1$ and there exist $k$ candidates $c_{i_1}, \dots, c_{i_k}$ 
that precede $p$. The cost of a minimum bribery achieving this 
is $$(N-k)N^2 + kN - \sum_{1 \leq h<j \leq k} c(c_{i_h},c_{i_j}).$$ The first term in this sum 
is the cost of swapping $p$ with the candidates $C \setminus \{c_{i_1}, \dots, c_{i_k},p\}$,
and the remaining terms correspond to swapping the candidates $c_{i_1}, \dots, c_{i_k}$ 
with every other candidate before them. 
Note that by the definition of the cost function, swapping $c_{i_j}$ with \emph{every} candidate preceding it in $w$
has cost $N$, but we do not swap $c_{i_j}$ with the candidates $c_{i_1}, \dots, c_{i_{j-1}}$.
Thus, it is clear that the cost of such a bribery is at most $\beta$ if and only if
$c(c_{i_h},c_{i_j})=1$ for every $1 \leq h<j \leq k$. This holds if and only if the set 
$\{v_{i_1}, \dots, v_{i_k}\}$ is a clique in $G$, showing that a solution exists if and only if 
there is a clique of size $k$ in $G$. This implies the correctness of the reduction.
\qed
\end{proof}

Finally, we present a kernelization algorithm for the \textsc{Swap Bribery} problem for $k$-approval,
where we consider both the budget $\beta$ and the number $n$ of votes as parameters.

\begin{theorem}
\label{thm_kernel}
If the minimum cost is $1$, then \textsc{Swap Bribery} for $k$-approval (where $k$ is part of the input)
with combined parameter $(n,\beta)$ admits a kernel with $O(n^2 \beta)$ votes and $O(n^2 \beta^2)$ candidates. 
Here, $n$ is the number of votes and~$\beta$ is the budget.
\end{theorem}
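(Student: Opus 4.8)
The plan is to exploit that, since the minimum swap cost is $1$, any solution consists of at most $\beta$ swaps in total. First I would observe that a swap can only change whether a candidate occupies a one-position or a zero-position of a vote if that candidate is close to the threshold: moving a candidate from position $k+d$ up into the top $k$, or from position $k-d+1$ down out of it, requires at least $d$ swaps and hence cost at least $d$, so with budget $\beta$ only the candidates occupying the $2\beta$ positions $k-\beta+1, \dots, k+\beta$ in a given vote --- its \emph{window} --- can ever change their score. Call a candidate \emph{mobile} if it lies in the window of at least one vote; there are at most $2\beta n$ mobile candidates, and together with $p$ these are the only candidates whose score the briber can influence.

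Next I would note that every non-mobile candidate has a score that is fixed over all admissible briberies, namely the number of votes in which it occupies a deep one-position (position at most $k-\beta$). Since $p$ can gain at most $\beta$ points, $\score(p,V^{\Gamma})\le \score(p,V)+\beta$; hence if some candidate has score exceeding $\score(p,V)+2\beta$ we may immediately reject, as it can never be overtaken. Moreover, the non-mobile candidates influence the answer only through the single value $s^{\ast}$, the largest fixed score among them, because $p$ is a winner exactly when its final score is at least the maximum score taken over all candidates.

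The kernel would then keep $p$, all mobile candidates (preserving, in each vote, their relative order inside the window and the costs of the swaps among them), one representative candidate realizing the threshold $s^{\ast}$, and, inside each vote, the deep one-positions occupied by mobile candidates. I would pad the deep one-position block of every vote with fresh dummy candidates, each appearing in a single vote as in the proof of Theorem~\ref{thm_2approval}, so that all votes share one common, small value of $k'$, and fill the remaining positions with the zero-positions of the mobile candidates together with further dummies. The equivalence argument is that the family of window reconfigurations achievable within budget $\beta$, and every candidate's total score, are preserved exactly, so $p$ wins the reduced election if and only if it wins the original one. Counting the mobile candidates, the per-vote padding used to equalize $k'$, and the score-setting gadgets yields the claimed bounds of $O(n^2\beta)$ votes and $O(n^2\beta^2)$ candidates.

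The main obstacle I anticipate is precisely the non-mobile candidates that occupy a deep one-position in some votes but a zero-position in others: they carry a fixed but possibly large score and take up top-$k$ slots, yet they cannot simply be deleted, because removing such a candidate from a vote shifts a window candidate across the threshold and thereby corrupts both the score bookkeeping and the window geometry. Reducing them correctly --- replacing their collective effect by a single threshold candidate while re-padding every vote to a uniform $k'$ so that the window structure is untouched --- is the delicate step of the argument, and the place where the $n^2$ and $\beta^2$ factors in the bounds arise.
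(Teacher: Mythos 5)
Your proposal is correct and follows essentially the same route as the paper's proof: the identical $2\beta$-position window argument bounding the score-changeable candidates by $2\beta n$, the same reduction of all non-mobile candidates to a single maximum-score representative (the paper's $c^*$), and the same kernel construction that truncates each vote to its window, pads with per-vote dummies to a uniform small threshold $k'$ (the paper takes $k'=\beta+1$), and restores every relevant candidate's score via extra dummy-padded votes, giving the same $O(n^2\beta)$ vote and $O(n^2\beta^2)$ candidate counts. Your minor deviations --- retaining mobile candidates' deep one-positions in-vote rather than folding them entirely into the score-restoring votes, and the optional early-rejection test when some score exceeds $\mathrm{score}(p,V)+2\beta$ --- are cosmetic and do not change the argument.
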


\begin{proof}
Let $V$, $C$, $p \in C$, and $\beta$ denote the set of votes, the set of candidates, 
the preferred candidate, and the budget given, respectively; we write $|V|=n$.
The idea of the kernelization algorithm is that not all candidates are interesting
for the problem: only candidates that can be moved within the budget $\beta$ from a zero-position
to a one-position or vice versa are relevant.

Let $\Gamma$ be a set of swaps with total cost at most $\beta$.
Clearly, as the minimum possible cost of a swap is $1$, 
we know that there are only $2\beta$ candidates $c$ in a vote $v \in V$  
for which $\score(c,v) \neq \score(c,v^{\Gamma})$ is possible, 
namely, such a $c$ has to fulfill $k-\beta+1 \leq \rank(c,v) \leq k+\beta$.
Thus, there are at most $2\beta n$ candidates for which $\score(c,V) \neq \score(c,V^{\Gamma})$ is possible;
let us denote the set of these candidates by $\widetilde{C}$. 
Let $c^*$ be a candidate in~$C \setminus \widetilde{C}$ whose score is the maximum among the 
candidates in~$C \setminus \widetilde{C}$. 

Note that a candidate $c \in C \setminus (\widetilde{C} \cup \{c^*,p\})$
has no effect on the answer to the problem instance. Indeed,   
if $\score(p,V^{\Gamma}) \geq \score(c^*,V^{\Gamma})$, then the score of $c$ is not relevant,
and conversely, if $\score(p,V^{\Gamma}) < \score(c^*,V^{\Gamma})$ then $p$ loses anyway.
Therefore, we can disregard each candidate in $C \setminus \widetilde{C}$ except for $c^*$ and $p$.

The kernelization algorithm constructs an equivalent instance $K$ as follows.
In $K$, neither the budget, nor the preferred candidate 
will be changed. However, we will change the value of $k$ to be $\beta+1$, so the kernel 
instance $K$ will contain a $(\beta+1)$-approval 
election\footnote{We use $\beta+1$ instead of $\beta$ to avoid complications with the case $\beta=0$.}. 
We define the set $V_K$ of votes and the set $C_K$ of candidates in $K$ as follows.

\begin{figure}[t]
\begin{center}
    \psfrag{beta}[][]{$\underbrace{\phantom{555555555}}_{\beta}$}
    \psfrag{k}[][]{$\overbrace{\phantom{II555555555555555555555555}}^{k}$}
    \psfrag{s}[][]{$\ast$}
    \psfrag{dots}[][]{$\cdots$}
\includegraphics{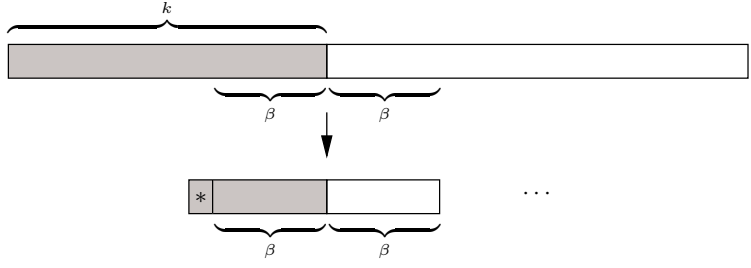}
\end{center}
\caption{Illustration of the truncation of the votes by the algorithm of Theorem~\ref{thm_kernel}.
The grey and white areas show the one-positions and the zero-positions in the depicted vote, respectively.
The symbol $\ast$ stands for a dummy candidate.}
\label{fig_kernel}
\end{figure}

First, the algorithm ``truncates'' each vote $v$, by deleting all its positions 
(together with the candidates in these positions) except for the $2\beta$ positions 
between $k-\beta+1$ and $k+\beta$. 
Then again, we shall make use of dummy candidates (see the proof of Theorem~\ref{thm_2approval});
we will ensure $\score(d, V^{\Gamma}) \leq 1$ for each such dummy $d$.
Swapping a dummy with any other candidate will have cost 1 in $K$.
Now, for each obtained truncated vote, 
the algorithm inserts a dummy candidate in the first position, so that the obtained votes have length~$2\beta+1$.
In this step, the algorithm also determines the set $\widetilde{C}$
and the candidate $c^*$. This can be done in linear time.
We denote the votes\footnote{In fact, these cropped votes are not real votes 
yet in the sense that they do not contain each candidate.} obtained in this step by $V_r$.
We do not change the costs of swapping candidates of $\widetilde{C} \cup \{c^*,p\}$ in some vote $v \in V_r$.
For an illustration of this step, see Figure~\ref{fig_kernel}.

Next, to ensure that $K$ is equivalent to the original
instance, the algorithm constructs a set~$V_d$ of votes such that
$\score(c, V_r \cup V_d)=\score(c,V)$ holds for each candidate $c$ in $\widetilde{C} \cup \{ p, c^*\}$.
This can be done by constructing $\score(c,V) - \score(c,V_r)$ newly added votes
where $c$ is on the first position, and all the next $2\beta$ positions are taken by dummies. 
This way we ensure $\score(c,V_d)=\score(c,V_d^{\Gamma})$
for any set $\Gamma$ of swaps with total cost at most $\beta$.

If $D$ is the set of dummy candidates created so far, then let $C_K=\widetilde{C} \cup \{ p, c^*\} \cup D$.
To finish the construction of the votes, 
it suffices to add for each vote $v \in V_r \cup V_d$ the candidates not yet contained in $v$,
by appending them at the end (starting from the $(2\beta+1)$-th position) in an arbitrary order.
The obtained votes will be the votes $V_K$ of the kernel.

The presented construction needs polynomial time. 
Using the above mentioned arguments, it is straightforward to verify that the constructed 
kernel instance is indeed equivalent to the original one. 
Thus, it remains to bound the size of $K$.

Clearly, $|\widetilde{C} \cup \{ p, c^*\}|\leq 2n \beta +2$. The number of dummies introduced in the first phase 
is exactly $|V_r|=n$. As the score of any candidate in $V$ is at most $n$, 
the number of votes created in the second phase is 
at most $(2n \beta +2)n$, which implies that 
the number of dummies created in this phase is at most $(2n \beta +2)n \cdot 2\beta$.
This shows $|C_K| \leq$  $n+(2n \beta +2)(2n\beta+1)=O(n^2 \beta^2)$, 
and also $|V_K| \leq (2n \beta +3)n=O(n^2 \beta)$. 
\qed
\end{proof}

We remark that if each cost is at least $1$, then a kernel with $(k+ \beta)n$ candidates 
and~$n$ votes is easy to obtain, 
by simply deleting every candidate from the instance whose rank is greater than $k+\beta$ in all of the votes.
This simple method might be favorable to the above result in cases where~$k$ is small.


\section{Conclusion}\label{sec:conclusion}


We have taken the first step towards parameterized and multivariate investigations of 
{\sc Swap Bribery} under certain voting systems, focusing on $k$-approval. 
We discussed how the complexity of this problem depends on the cost function. 
In response to an initiation of Elkind et al.~\cite{EFS09} to identify natural cases of {\sc Swap Bribery}
that are computationally tractable, we showed that the case where all swaps have equal costs is polynomial-time solvable.
By contrast, as soon as we have two different costs, 
the problem becomes NP-complete for $k$-approval for any fixed $k \geq 2$,  
and even W[1]-hard if the parameter is the budget~$\beta$. 
 
We provided a rather general result showing that {\sc Swap Bribery} is FPT 
for a very large class of voting systems if the parameter is the number of candidates. 
This revaluates previous NP-hardness results: \textsc{Swap Bribery} could be computed efficiently 
if the number of candidates is small, which is a common setting e.g. in presidential elections.
The technique used can be applied to different problems from voting theory, 
leading to fixed-parameter tractability with respect to the number of candidates in various settings.

We also shed some light on the complexity of \textsc{Swap Bribery} for $k$-approval 
when considering combined parameters.
We hope that our results will help to understanding the intricate issues of the interplay 
between the parameters 'number of votes', the budget, or the value of $k$.
On one hand, we strengthened the known NP-completeness result for a single vote by showing W[1]-hardness 
with respect to~$k$ in the case when $k$ is part of the input.
On the other hand, we proposed an FPT algorithm for the case where the 
parameter is the number of votes, but~$k$ is a fixed constant.
In addition, we presented a polynomial kernel for the problem where the parameters are 
the number of votes and the budget.

There are plenty of possibilities to carry on our initiations. 
First, there are more parameterizations to be studied in the spirit of Niedermeier~\cite{Nie10}.
Examining the possibilities for kernelizations with respect to different parameters, 
as for instance was done by Betzler in~\cite{Bet10}, is an interesting approach. 

Second, our FPT result for the case where the parameter is the number of votes
relies on an integer linear program formulation, and uses a result by Lenstra. 
Since this approach does not provide running times that are suitable in practice, 
it would be interesting to give combinatorial algorithms that compute an optimal swap bribery. 
This might be particularly relevant for a scenario described by Elkind et al.~\cite{EFS09}, where bribery 
is not necessarily considered as an undesirable thing, like in the case of campaigning.

Also, we have focused our attention to $k$-approval, but the same questions could be studied for 
other voting systems, or for the special case of {\sc Shift Bribery} which was shown to be 
NP-complete for several voting systems~\cite{EFS09}, or other variants of the bribery problem 
as mentioned in the introduction.
For instance, we have only looked at {\it constructive} swap bribery, but the case of 
{\it destructive} swap bribery (when our aim is to achieve that a disliked candidate does {\it not} win) 
is worth further investigation as well.

\vspace{4pt}
\noindent{\bf Acknowledgments.} We thank Rolf Niedermeier for an inspiring initial discussion.



\bibliographystyle{abbrv}
\bibliography{bribery}


\end{document}